\newtheorem{theorem}{Theorem}
\newtheorem{lemma}{Lemma}
\newtheorem{definition}[theorem]{Definition}
\newcommand{\tinyspace}{\mspace{1mu}}
\newcommand{\microspace}{\mspace{0.5mu}}
\newcommand{\norm}[1]{\left\lVert\tinyspace#1\tinyspace\right\rVert}
\def\<{\langle}
\def\>{\rangle}
\def \lket {\left|}
\def \rket {\right\rangle}
\def \lbra {\left\langle}
\def \rbra {\right|}
\newcommand{\ket}[1]{\lket\microspace #1 \microspace\rket}
\newcommand{\bra}[1]{\lbra\microspace #1 \microspace\rbra}
\DeclarePairedDelimiter{\ceil}{\lceil}{\rceil}
\let\vec\bm 
\newcommand{\sgn}{\mathrm{sgn}}
\begin{document}

\title{Minimum entanglement protocols for function estimation}
\author{Adam Ehrenberg}
\thanks{These two authors contributed equally.}
\affiliation{Joint Center for Quantum Information and Computer Science, NIST/University of Maryland College Park, Maryland 20742, USA}
\affiliation{Joint Quantum Institute, NIST/University of Maryland College Park, Maryland 20742, USA}
\author{Jacob Bringewatt}
\thanks{These two authors contributed equally.}
\affiliation{Joint Center for Quantum Information and Computer Science, NIST/University of Maryland College Park, Maryland 20742, USA}
\affiliation{Joint Quantum Institute, NIST/University of Maryland College Park, Maryland 20742, USA}
\author{Alexey V. Gorshkov}
\affiliation{Joint Center for Quantum Information and Computer Science, NIST/University of Maryland College Park, Maryland 20742, USA}
\affiliation{Joint Quantum Institute, NIST/University of Maryland College Park, Maryland 20742, USA}
\date{\today}

\begin{abstract}
    We derive a family of optimal protocols, in the sense of saturating the quantum Cram\'{e}r-Rao bound, for measuring a linear combination of $d$ field amplitudes with quantum sensor networks, a key subprotocol of general quantum sensor network applications. We demonstrate how to select different protocols from this family under various constraints. Focusing primarily on entanglement-based constraints, we prove the surprising result that highly entangled states are not necessary to achieve optimality in many cases. Specifically, we prove necessary and sufficient conditions for the existence of optimal protocols using at most $k$-partite entanglement. We prove that the protocols which satisfy these conditions use the minimum amount of entanglement possible, even when given access to arbitrary controls and ancilla. Our protocols require some amount of time-dependent control, and we show that a related class of time-independent protocols fail to achieve optimal scaling for generic functions.
\end{abstract}
\maketitle

\section{Introduction}
Entanglement is a hallmark of quantum theory and plays an essential role in many 
quantum technologies. 
Consider single-parameter metrology, where one seeks to determine an unknown phase 
$\theta$ that is independently and identically coupled to $d$ sensors via a linear Hamiltonian $\hat{H}$. Given a probe state $\hat{\rho}$, evolution under $\hat{H}$ encodes $\theta$ into $\hat{\rho}$ where it can then be measured.
If the sensors are classically correlated the ultimate attainable uncertainty is the so-called standard quantum limit $\Delta\theta\sim 1/\sqrt d$ \cite{giovannetti2004quantum}, which can be surpassed only if the states are prepared in an entangled state  \cite{giovannetti2006quantum, pezze2009entanglement}; if $O(d)$-partite entanglement is used, the Heisenberg limit $\Delta\theta\sim1/d$ can be achieved \cite{hyllus2012fisher, toth2012multipartite, augusiak2016asymptotic}. The necessity of entanglement for optimal measurement has also been explored in numerous other contexts \cite{toth2014quantum, braun2018quantum}; for instance, in sequential measurement schemes (where one may apply the encoding unitary multiple times) \cite{luis2002phase, higgins2007entanglement}, in the presence of decoherence \cite{huelga1997improvement, escher2011general, boixo2012entangled, demkowicz2014using}, when the coupling Hamiltonian is non-linear \cite{boixo2007generalized, boixo2008quantum, tilma2010entanglement}, or in reference to resource theories for metrology \cite{gour2008resource, marvian2014extending, marvian2016coherence, zhang2017detecting}.

In this paper, we consider the amount of entanglement required to saturate the quantum Cram\'{e}r-Rao bound, which lower bounds the variance of measuring an unknown quantity~\cite{helstrom1976quantum, braunstein1994statistical, braunstein1996generalized, holevo2011probabilistic}, in the prototypical multiparameter setting of a quantum sensor network, where $d$ independent, unknown parameters $\vec{\theta}$ (boldface denotes vectors) are each coupled to a unique quantum sensor.
Specifically, we revisit the problem of optimally measuring a single linear function $q(\vec{\theta})$ \cite{proctor2017networked, eldredge2018optimal, ge2018distributed, proctor2018multiparameter, altenburg2018multi, rubio2020quantum, gross2020one, triggiani2021heisenberg, oh2021distributed, malitesta2021distributed}, which is a crucial element of optimal protocols for more general quantum sensor network problems
(the case of measuring one or multiple analytic functions \cite{qian2019heisenberg, bringewatt2021protocols} and the case where the parameters $\vec\theta$ are not independent \cite{qian2020optimal} reduce asymptotically to the linear problem considered here). Therefore, we focus on measuring a single linear function of independent parameters for ease of presentation while emphasizing that our results generalize.

Given the similarity of measuring a single linear function
to the single-parameter case and the fact that such functions of local parameters are global properties of the system, one might expect (provided all the local parameters non-trivially appear in $q$) that $d$-partite entanglement is necessary. This intuition is reinforced by the fact that all existing optimal protocols for this problem do, in fact, make use of $d$-partite entanglement \cite{proctor2017networked, eldredge2018optimal, gross2020one}. 

We show that such intuition is faulty and only holds in the case where
$q$ is approximately an average of the unknown parameters. 
In particular, we derive a family of protocols that saturate necessary and sufficient algebraic conditions to achieve optimal performance in this setting, and
we prove necessary and sufficient conditions on $q$ for the existence of optimal protocols using at most $(k< d)$-partite entanglement. The more uniformly distributed $q$ is amongst the unknown parameters, the more entanglement is required. We also consider other resources of interest, such as the average entanglement used over the course of the protocol, as well as the number of entangling gates needed to perform these protocols, and discuss optimizing them within our scheme.

Finally, we address the impracticality of certain assumptions that have typically been made in the more theoretically-focused literature on function estimation protocols. Specifically, we show that so-called probabilistic protocols fail to achieve the Heisenberg limit except for a narrow class of functions.

\section{Problem Setup}
We first briefly review the problem of measuring a linear function of unknown parameters in a quantum sensor network  \cite{eldredge2018optimal, proctor2017networked, proctor2018multiparameter, altenburg2018multi, rubio2020quantum, gross2020one}. Consider a network of $d$ qubit quantum sensors coupled to $d$ independent, unknown parameters $\vec\theta\in\mathbb{R}^d$ via a Hamiltonian of the form
\begin{equation}\label{eq:H}
    \hat{H}(s)=\sum_{i=1}^d\frac{1}{2}\theta_i\hat{\sigma}_i^{z} +\hat{H}_c(s),
\end{equation}
where $\hat{\sigma}_i^{x,y,z}$ are the Pauli operators acting on qubit $i$ and $\hat{H}_{c}(s)$ for $s\in[0,t]$ is any choice of time-dependent, $\vec\theta$-independent control Hamiltonian, potentially including coupling to an arbitrary number of ancilla. That is, $\hat{H}_{c}(s)$ accounts for any possible parameter-independent contributions to the Hamiltonian, including those acting on any extended Hilbert space with a (finite) dimension larger than that of the network of $d$ qubit sensors directly coupled to the unknown parameters~\footnote{Thus, the Hilbert space under consideration is a $(d+n_a)$-qubit Hilbert space of dimension $2^{d+n_a}$, where $n_a$ is the number of ancilla.}.
We encode the parameters $\vec\theta$ into a quantum state $\hat{\rho}$ via the unitary evolution generated by a Hamiltonian of this form for a time $t$. Given some choices of initial probe state, control $\hat{H}_{c}(s)$, final measurement, and classical post-processing, we seek to construct an estimator for a linear combination $q(\vec\theta)=\vec\alpha\cdot\vec\theta$ of the unknown parameters, where $\vec\alpha\in\mathbb{R}^d$ is a set of known coefficients. Throughout this paper, we assume without loss of generality that $\norm{\vec\alpha}_\infty=|\alpha_1|$.
Ref.~\cite{eldredge2018optimal} established that the fundamental limit for the mean square error $\mathcal{M}$ of an estimator for $q$ is 
\begin{equation}\label{eq:bound}
    \mathcal{M}\geq \frac{\norm{\vec\alpha}_\infty^2}{t^2},
\end{equation}
where $t$ is the total evolution time.

Eq.~(\ref{eq:bound}) is derived via the single-parameter quantum Cram\'{e}r-Rao bound~\cite{helstrom1976quantum, braunstein1994statistical, braunstein1996generalized, boixo2007generalized, holevo2011probabilistic}. 
This is somewhat surprising: while we seek to measure only a single quantity $q(\vec\theta)$, $d$ parameters control the evolution under Eq.~(\ref{eq:H}), so we do not \emph{a priori} satisfy the conditions for the use of the single-parameter quantum Cram\'{e}r-Rao bound. However, we can justify its validity for our system: consider an infinite set of imaginary scenarios, each corresponding to a choice of artificially fixing $d-1$ degrees of freedom and leaving only $q(\vec\theta)$ free to vary. Under any such choice, our final quantum state depends on a single parameter $q$, and we can apply the single-parameter quantum Cram\'{e}r-Rao bound. While this requires giving ourselves information that we do not have, additional information can only reduce $\mathcal{M}$, and, therefore, any such choice provides a lower bound on $\mathcal{M}$ when we do not have such information. To obtain the tightest possible bound there must be some choice(s) of artificially fixing $d-1$ degrees of freedom that gives us no (useful) information about $q(\vec \theta)$. We will derive algebraic conditions that characterize such choices.

Thus, we may apply the single-parameter quantum Cram\'{e}r-Rao bound
\begin{equation}\label{eq:boixo}
    \mathcal{M}\geq \frac{1}{\mathcal{F}(q)}\geq \frac{1}{t^2\norm{\hat{g}_q}_s^2},
\end{equation}
where $\mathcal{F}$ is the quantum Fisher information, $\hat{g}_q=\partial\hat{H}/\partial q$ (the partial derivative fixes the other $d-1$ degrees of freedom), and the seminorm $\norm{\hat{g}_q}_s$ is the difference of the largest and smallest eigenvalues of $\hat{g}_q$ \cite{boixo2007generalized}. For our problem, the best choice of fixing extra degrees of freedom---in the sense of yielding the tightest bound via Eq.~(\ref{eq:boixo})---gives $\norm{\hat{g}_q}_s^2=1/\norm{\vec\alpha}_\infty^2$, yielding Eq.~(\ref{eq:bound})~\cite{eldredge2018optimal}. The proof of this fact is provided in Appendix~\ref{app:generalization} for completeness.

\section{Conditions for Saturable Bounds}
While the argument above justifies applying the single-parameter bound in our multiparameter scenario, it offers no roadmap for constructing optimal protocols. The quantum Fisher information matrix $\mathcal{F}(\vec\theta)$ provides an information-theoretic solution to this issue. When calculating $\mathcal{F}(\vec\theta)$ we restrict to pure probe states, as the convexity of the quantum Fisher information matrix implies mixed states fail to produce optimal protocols~\cite{fujiwara2001akio, liu2019quantum}. For pure probe states and unitary evolution for time $t$ under the Hamiltonian in Eq.~(\ref{eq:H}), it has matrix elements~\cite{liu2019quantum}
\begin{align}\label{eq:fim}
    \mathcal{F}(\vec\theta)_{ij}&=4\left[\frac{1}{2}\langle\{\hat{\mathcal{H}}_i(t),\hat{\mathcal{H}}_j(t)\}\rangle-\langle\hat{\mathcal{H}}_i(t)\rangle\langle\hat{\mathcal{H}}_j(t)\rangle\right],
\end{align}
where $\{\cdot,\cdot\}$ denotes the anti-commutator and
\begin{equation}
    \hat{\mathcal{H}}_{i}(t) = -\int_{0}^{t}ds \hat{U}^{\dag}(s) \hat{g}_i \hat{U}(s),
\end{equation}
with $\hat{g}_i=\partial \hat{H}/\partial\theta_{i}=\hat\sigma^z_j/2$ and $\hat{U}$ the time-ordered exponential of $\hat{H}$. The expectation values in Eq.~(\ref{eq:fim}) are taken with respect to the initial probe state.

Choosing $d-1$ degrees of freedom to fix in hopes of using the single-parameter bound then corresponds to a basis transformation $\vec\theta\rightarrow\vec q$, where we take $q_1=q$ to be our quantity of interest, and the other arbitrary $q_{j>1}$ are the extra degrees of freedom. This basis transformation has a corresponding Jacobian $J$ such that $\mathcal{F}(\vec q)=J^\top\mathcal{F}(\vec\theta)J$. To obtain the bound in Eq.~(\ref{eq:bound}) and have no information about $q(\vec\theta)$ from the extra degrees of freedom $q_{j>1}$,
$\mathcal{F}(\vec q)$ must have the following properties:
\begin{align}\label{eq:condonfim}
    \mathcal{F}(\vec q)_{11}&=\frac{t^2}{\alpha_1^2},\\
    \mathcal{F}(\vec q)_{1i}&=\mathcal{F}(\vec q)_{i1}=0\quad (\forall\, i\neq 1) \label{eq:condonfimoffdiagonal}
\end{align}
(recall $|\alpha_1|=\norm{\vec\alpha}_\infty$ without loss of generality). 
Via the inverse basis transformation $\vec q\rightarrow \vec \theta$, we find Eqs.~(\ref{eq:condonfim})-(\ref{eq:condonfimoffdiagonal}) are satisfied if and only if
\begin{equation}\label{eq:saturabilitycond}
   \mathcal{F}(\vec\theta)_{1j} = \mathcal{F}(\vec\theta)_{j1} = \frac{\alpha_j}{\alpha_1}t^2,
\end{equation}
where we assume here and for the rest of the main text that $|\alpha_1|>|\alpha_j|$ $\forall j>1$ for ease of presentation. Our main result (see Theorem~\ref{thm:entanglement}) is 
unchanged by this assumption, although its proof and that of several other results becomes more tedious.
The explicit derivation of Eq.~(\ref{eq:saturabilitycond}), along with the generalization of our results beyond this assumption, is provided in Appendix~\ref{app:generalization}. 

Finally, we remark that the problem of function estimation is mathematically equivalent to the concept of nuisance parameters in the literature on classical~(c.f. \cite{amari2012differential}) and quantum estimation theory~\cite{yang2019attaining, suzuki2020nuisance, suzuki2020quantum}. One finds similarly derived bounds in these contexts~\footnote{For instance, the conditions in Eqs.~(\ref{eq:condonfim})-(\ref{eq:condonfimoffdiagonal}) are equivalent to the so-called global parameter orthogonality condition discussed in Sect. 5.5 of Ref.~\cite{suzuki2020quantum}.}. However, the protocols we now describe, and especially their entanglement features, are new to this work.

\section{A Family of Optimal Protocols} 
We now derive a family of protocols that achieve Eq.~(\ref{eq:saturabilitycond}). A particular protocol consists of preparing a pure initial state $\hat\rho_0=\ket{\psi(0)}\bra{\psi(0)}$, evolving $\hat\rho_0$ under the unitary generated by $\hat H(s)$ for time $t$, performing some positive operator-valued measurement,
and computing an estimator for $q$ from the measurement outcomes. Given $\hat\rho_0$ and $\hat H(s)$, $\mathcal{F}(\vec\theta)$ can be computed via Eq.~(\ref{eq:fim}).

The protocols we propose will use $\hat H_c(s)$ to coherently switch between probe states with different sensitivities to the unknown parameters $\vec\theta$, thereby accumulating an overall sensitivity to the unknown function of interest $q$. In particular, we consider the following set $\mathcal{T}$ of $N=3^{d-1}$ one-parameter families of cat-like states:
\begin{equation}\label{eqn:state_definition_1}
    \ket{\psi(\vec\tau; \varphi)}=\frac{1}{\sqrt{2}}\left(\ket{\vec\tau}+e^{i\varphi}\ket{-\vec\tau}\right),
\end{equation}
where each family of states is labeled by a vector $\vec\tau\in\{0,\pm 1\}^d$ such that
\begin{equation}\label{eqn:state_definition_2}
    \ket{\vec\tau}=\bigotimes_{j=1}^d\begin{cases}
    \ket{0},& \tau_j\neq -1\\
    \ket{1},& \tau_j=-1\\
    \end{cases},
\end{equation}
and $\varphi\in\mathbb{R}$ parameterizes individual states in the family.
We require that $\tau_1=1$, as any optimal protocol must always be sensitive to this most important parameter; see Lemma~\ref{lem:f11} in Appendix~\ref{sec:a_useful_lemma}.
Each of the probe states described in Eqs.~(\ref{eqn:state_definition_1}) and (\ref{eqn:state_definition_2}) is a superposition of exactly two states in the $\hat{\sigma}^{z}$ basis (which we call ``branches''). Note that these states use no ancilla.

Our protocols proceed in three main stages: a state initialization stage, a parameter encoding stage, and, finally, a measurement stage. In the state initialization stage, we prepare the probe state $\ket{\psi(\vec{\tau};0)}$ that is then coupled to the parameters in the parameter encoding stage via a Hamiltonian of the form of Eq.~(\ref{eq:H}). During this parameter encoding stage, we use the control Hamiltonian to coherently switch between families of probe states at particular (optimized) times, such that the relative phase between the branches is preserved during the switches (that is, $\hat{H}_{c}(s)$ changes $\vec{\tau}$, but not $\varphi$). This can be done using finitely many CNOT and $\hat\sigma^x$ gates. We stay in the family of states $\ket{\psi(\vec\tau^{(n)};\varphi)}$ for time $p_n t$, where $p_n\in[0,1]$ such that $\sum_n p_n=1$. Here $n$ indexes some enumeration of the families of states in $\mathcal{T}$. There are three possibilities for the relative phase that qubit $j$ induces between the two branches due to the time spent in family $n$. If $\tau_{j}^{(n)} = 0$, then no relative phase is accrued because qubit $j$ is disentangled. If $\tau_{j}^{(n)} = 1$, the relative phase imprinted by $\hat{\sigma}_{j}^{z}/2$ is $p_{n}\theta_{j}t$, while if $\tau_{j}^{(n)} = -1$, the relative phase is $-p_{n}\theta_{j}t$. Thus, the $j$-th qubit always induces a relative phase of $p_{n}\tau_{j}^{(n)}\theta_{j}t$. Accounting for all qubits, being in family $n$ for time $p_{n} t$ induces a relative phase 
\begin{equation}
    \phi_{n} = \sum_{j}p_{n}t\tau_{j}^{(n)}\theta_{j}.
\end{equation}
Given some time-dependent probe $\ket{\psi(t)}$ which is in each family $\ket{\psi(\vec{\tau}^{(n)};\varphi)}$ for time $p_{n}t$, the total phase $\phi$ accumulated between the branches over the course of the entire parameter encoding stage of the protocol is
\begin{equation}\label{eq:totalphase}
    \phi = \sum_{n}\phi_{n} = \sum_{n} \sum_{j}p_{n}t\tau_{j}^{(n)}\theta_{j} = \sum_{j}(T\vec{p})_{j}\theta_{j}t,
\end{equation}
where we implicitly defined $\vec p=(p_1, \cdots, p_N)^\top$ and the $d \times N$ matrix $T$ with matrix elements $T_{mn}=\tau^{(n)}_m$. If $\vec{p}$ is chosen such that $T\vec p\propto \vec\alpha$ this total phase is $\propto qt$. More formally, choosing $\vec{p}$ such that
\begin{equation}\label{eq:prob}
    T\vec p = \frac{\vec \alpha}{\alpha_1}
\end{equation}
achieves the saturability condition in Eq.~(\ref{eq:saturabilitycond}).
Algebraic details of this calculation are provided in Appendix~\ref{sec:proof_of_13}. 

Any nonnegative solution (in the sense that $p_{n}\geq 0$ $\forall\,n$) to Eq.~(\ref{eq:prob}) 
specifies a valid set of states and evolution times satisfying Eq.~(\ref{eq:saturabilitycond}).
Because the system in Eq.~(\ref{eq:prob}) is highly underconstrained, such protocols do not necessarily use all $3^{d-1}$ families of states in $\mathcal{T}$. As an illustrative example, consider the solutions to Eq.~(\ref{eq:prob}) for two qubits. The available families of states are described by 
\begin{align}
    T = \begin{pmatrix}
    \vec{\tau}^{(1)} & \vec{\tau}^{(2)} & \vec{\tau}^{(3)}
    \end{pmatrix} =
    \begin{pmatrix}
    1 & 1 & 1 \\
    1 & -1 & 0
    \end{pmatrix}.
\end{align}
By Eq.~(\ref{eq:prob}), the fraction of time spent in each family of states must satisfy
\begin{align}
    p_1+p_2+p_3=1,\\
    p_1-p_2=\frac{\alpha_2}{\alpha_1}.
\end{align}

Solving in terms of $p_1$ leads to the 1-parameter family of solutions $p_2=p_1-\frac{\alpha_2}{\alpha_1}$ and $p_3=1+\frac{\alpha_2}{\alpha_1} -2p_1$, where $p_{n}\in[0,1]$ for all $n$.
Without loss of generality, assume $\alpha_1=1$. Then non-negativity is achieved by
\begin{equation}\label{eqn:example}
    p_1\in\begin{cases}
    \big[\alpha_2, \frac{1+\alpha_2}{2}\big] & \alpha_2\geq 0\\
    \big[0, \frac{1+\alpha_2}{2}\big] & \alpha_2< 0
    \end{cases}.
\end{equation}
There are many solutions satisfying these constraints. Of particular note, there is a two-family protocol that does not require using exclusively maximally entangled states: for $\alpha_{2} > 0$, let $p_1 = \alpha_{2}$ so that $p_2 = 0$ and $p_3 = 1-\alpha_{2}$; for $\alpha_{2} < 0$, let $p_1 = 0$ so that $p_2 = -\alpha_{2}$ and $p_3 = 1+\alpha_{2}$.

We refer to protocols achieving Eq.~(\ref{eq:prob}) (or, equivalently, Eq.~(\ref{eq:saturabilitycond})) as optimal. Note, however, that achieving these conditions is a property of the probe state(s) used and does not \textit{a priori} guarantee the existence of measurements to extract $q$. Therefore, we now move on to describing the third main stage of our protocols, which is the explicit measurement scheme: apply a sequence of $\hat{\sigma}_{i}^{x}$ and $\mathrm{CNOT}$ gates to the final state of a protocol to transform it into $1/\sqrt{2}(\ket{0}+e^{iqt/\alpha_{1}}\ket{1})(\ket{0\dots0})$. Then perform single qubit phase estimation to measure $q$~\footnote{It is worth pointing out that it is not strictly necessary to reduce the problem to single qubit phase estimation. The reason we consider disentangling all qubits is to reduce fully to the single qubit phase estimation problem of the robust phase estimation papers in Refs.~\cite{kimmel2015robust,kimmel2015robusterratum,belliardo2020achieving}, described below. However, one could apply essentially equivalent protocols by forgoing the disentangling of the qubits and simply performing parity measurements on the final cat-like state. Such parity measurements can be carried out by simply measuring all qubits individually.}. 

Such phase estimation is not as simple as it might appear, however. 
Because we are interested in how our error scales in the $t\rightarrow\infty$ limit, a naive approach loses track of which $2\pi$ interval the phase is in~\cite{higgins2009demonstrating, hayashi2018resolving, gorecki2020pi}. We could assume that this information is known \emph{a priori}~\cite{eldredge2018optimal}, but this is unjustified in practice as the required knowledge is of precision $\sim |\alpha_1|/t$, i.e. it is already within the Heisenberg limit. More realistically, starting with any $t$-independent prior knowledge of the unknown phase, we use the so-called robust phase estimation protocols from Refs.~\cite{kimmel2015robust, kimmel2015robusterratum, belliardo2020achieving} to saturate Eq.~(\ref{eq:bound}) up to a modest constant factor. {Such protocols work by optimally dividing the total time $t$ into $K$ stages with stage $k$ using a time $2\nu_k t_k$ such that $2\sum_{k=1}^K \nu_k t_k=t$. In each stage, one encodes the parameters into the state for a time $t_k$ and then makes a ($\hat{\sigma}^{x}$ or $\hat{\sigma}^{y}$) measurement. This is repeated $2\nu_k$ times in order to obtain an estimate of $q$, which in each stage becomes a more and more precise estimate. Provided the time of the final stage scales linearly with the total time, i.e., $t_K\sim t$, Heisenberg scaling in time is still achieved and we can estimate $q$ with a mean square error achieving the bound in Eq.~(\ref{eq:bound}) up to a constant factor. For completeness, we review this measurement scheme in more detail in  Appendix~\ref{sec:robust_phase_estimation}.

To summarize, a full optimal protocol is as follows:
\begin{enumerate}
    \item Using any relevant experimental desiderata and optimization algorithm, find a nonnegative solution $\vec{p}$ to Eq.~(\ref{eq:prob}).
    \item Restrict $\vec{p}$ to its $\overline{N}$ nonzero elements, and restrict $T$ to the corresponding columns. If desired, reorder the elements of $\vec{p}$ and the columns of $T$. The $\overline{N}$ $\vec{\tau}$ corresponding to the columns of $T$ will be the families of states used in the protocol. 
    \item Initialize a quantum state on the $d$ qubits to $\ket{0}^{\otimes d}$.
    \item Using CNOT and $\hat{\sigma}^{x}$ gates, prepare $ \ket{\psi(\vec\tau^{(1)}; 0)}$, the first state of the protocol. Couple the state to the Hamiltonian $\hat{H}$ and remain in this family for time $p_{1}t_k$, leading to state $\ket{\psi(\vec\tau^{(1)}; \phi_{1})}$, where $\phi_{1} = \sum_{j}p_{1}t_{k}\tau_{j}^{(1)}\theta_{j}$. Here, $t_k$ is the time required by the current step of the robust phase estimation protocol. 
    \item Using CNOT and $\hat{\sigma}^{x}$ gates, coherently switch to $\ket{\psi(\vec\tau^{(2)}; \phi_{1})}$ from $\ket{\psi(\vec\tau^{(1)}; \phi_{1})}$. Remain in this family for time $p_{2}t_k$, leading to state $\ket{\psi(\vec\tau^{(2)}; \phi_{1}+\phi_{2})}$, with  $\phi_{2} = \sum_{j}p_{2}t_{k}\tau_{j}^{(2)}\theta_{j}$. 
    \item Repeat this process for all states in the restricted $T$, staying in the family parameterized by $\vec{\tau}^{(n)}$ for time $p_{n}t_k$, leading to a final state $\ket{\psi(\vec\tau^{(\overline{N})}; qt_{k})}$. 
    \item Using CNOT and $\hat{\sigma}^{x}$ gates, convert this final state to $1/\sqrt{2}(\ket{0}+e^{iqt_k}\ket{1})\ket{0}^{\otimes d-1}$.
    \item Make a measurement on the first qubit of the final state (see Appendix~\ref{sec:robust_phase_estimation} for more details) and repeat starting from step 3. After $2 \nu_k$ repetitions, move to the next stage of the robust phase estimation protocol, and use an updated evolution time $t_k$. After a number of stages $K$ as prescribed by the robust phase estimation protocol, extract a final estimate of $q$ with a mean square error achieving the bound in Eq.~(\ref{eq:bound}) up to a constant factor.
\end{enumerate}

Having described the full details of the protocol, including the subtleties involved in subdividing the total time $t$ into different stages in order to implement robust phase estimation, in the rest of the paper, for simplicity of presentation, we will simply consider the total encoding time $t$ and act as if the parameters can be encoded into the state in one step, using evolution for this full time. This should be viewed as a notational shorthand such that $t$ can be replaced with the relevant $t_k$ at any given stage when implementing the full protocol.

\section{Minimum Entanglement Solutions}
We now focus on solutions from our family of protocols that require the minimum amount of entanglement.
Specifically, we prove necessary and sufficient conditions on $\vec\alpha$ for the existence of a protocol that uses at most $k$-partite entanglement. 
This is the primary technical result of our paper. We emphasize that, while the protocols in the previous section use a particular choice of controls that does not include ancilla qubits, Theorem~\ref{thm:entanglement} applies to any protocol making use of a Hamiltonian described via Eq.~(\ref{eq:H}).

\begin{theorem}[Main result]\label{thm:entanglement}
Let $q(\vec\theta)=\vec\alpha\cdot\vec\theta$. Without loss of generality, let $\norm{\vec\alpha}_\infty=|\alpha_1|$. Let $k\in\mathbb{Z}^+$ so that
  \begin{equation}\label{eq:condfull}
      k-1<\frac{\norm{\vec\alpha}_1}{\norm{\vec\alpha}_\infty} \leq k.
 \end{equation}
An optimal protocol to estimate $q(\vec\theta)$, where the parameters $\vec\theta$ are encoded into the probe state via unitary evolution under the Hamiltonian in Eq.~(\ref{eq:H}) requires at least, but no more than, $k$-partite entanglement. 
\end{theorem}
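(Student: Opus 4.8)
The plan is to prove the two directions separately: an \emph{achievability} direction (an optimal protocol using at most $k$-partite entanglement exists) and a \emph{lower bound} direction ($k$-partite entanglement is necessary for any optimal protocol). For achievability, I would work within the family of protocols already constructed and show that Eq.~(\ref{eq:prob}) admits a nonnegative solution $\vec p$ supported only on columns $\vec\tau^{(n)}$ of $T$ with at most $k$ nonzero entries (equivalently, Hamming weight $\le k$), since each such family of cat-like states is at most $k$-partite entangled. Concretely, I would try to build an explicit solution: since $\norm{\vec\alpha}_1/\alpha_1 \le k$ and $\alpha_1$ is the largest component, one can greedily group the normalized coefficients $\alpha_j/\alpha_1\in(-1,1]$ into ``bundles'' of total absolute weight $\le 1$, always keeping index $1$ present (as required by $\tau_1=1$ and Lemma~\ref{lem:f11}), so that each bundle corresponds to a $\vec\tau$ of weight $\le k$; assigning the appropriate fraction of time $p_n$ to each bundle and padding with the all-but-one-zero families to make $\sum_n p_n = 1$ should produce a valid nonnegative solution. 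The inequality $k-1 < \norm{\vec\alpha}_1/\norm{\vec\alpha}_\infty$ guarantees that fewer than $k$-partite entanglement is \emph{not} enough room to fit all the weight, which is exactly what the lower bound must capture.

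For the lower bound, I would argue at the level of the quantum Fisher information, using Eq.~(\ref{eq:saturabilitycond}): any optimal protocol—regardless of controls or ancilla—must have $\mathcal{F}(\vec\theta)_{1j} = (\alpha_j/\alpha_1)t^2$ for all $j$, and $\mathcal{F}(\vec\theta)_{11} = t^2/\alpha_1^2$. Writing $\mathcal{F}(\vec\theta)_{ij}$ via Eq.~(\ref{eq:fim}) in terms of the operators $\hat{\mathcal H}_i(t)$, each of which is a time-average of $\hat U^\dagger(s)\,(\hat\sigma^z_i/2)\,\hat U(s)$ and hence has operator seminorm at most $t$, I would bound the off-diagonal covariances by Cauchy–Schwarz-type inequalities and relate the sum $\sum_j |\mathcal{F}(\vec\theta)_{1j}|$ to the entanglement structure of the probe state. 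The key object should be something like a sum of single-qubit-vs-rest ``sensitivity'' contributions: if the state (together with its controlled evolution) is at most $m$-partite entangled, then the total achievable ``cross-sensitivity'' $\sum_j |\mathcal{F}(\vec\theta)_{1j}|/\mathcal{F}(\vec\theta)_{11}^{1/2} \cdot \alpha_1$ is capped by something proportional to $m$, forcing $m \ge \norm{\vec\alpha}_1/\norm{\vec\alpha}_\infty$, hence $m\ge k$. I expect this to hinge on a convexity/monotonicity property of the QFI under restricting entanglement depth, possibly proved by decomposing the state over its $m$-producible structure and bounding each block's contribution to the relevant matrix elements.

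The main obstacle, I anticipate, is making the lower bound fully general: it must hold for \emph{arbitrary} time-dependent control $\hat H_c(s)$ and \emph{arbitrary} ancilla, so I cannot assume the probe stays in the nice cat-like family, and the entanglement depth can change continuously in time. I would handle this by passing to the integrated generators $\hat{\mathcal H}_i(t)$ and proving a ``no free cross-correlations'' statement: the contribution to $\mathcal{F}(\vec\theta)_{1j}$ from any portion of the evolution during which qubit $j$ is in a cluster of size $<$ (needed) is controllably small, and summing the best-case contributions over a partition of $[0,t]$ into intervals of fixed entanglement depth recovers the bound $\norm{\vec\alpha}_1/\norm{\vec\alpha}_\infty \le k$. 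The delicate point is ruling out that a cleverly correlated ancilla or rapidly time-varying control could ``borrow'' sensitivity across qubits without paying the entanglement cost; I would address this by noting that $\hat g_i$ acts only on sensor qubit $i$, so in the Heisenberg picture $\hat{\mathcal H}_i(t)$ is supported on the causal cone of qubit $i$ under $\hat U$, and the entanglement depth controls how many such cones can overlap coherently. I would likely defer the full technical version of this argument to an appendix, giving the QFI-covariance inequality and the partition argument in the main text.
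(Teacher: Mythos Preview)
For achievability you propose a constructive greedy bundling, whereas the paper proves existence non-constructively via the Farkas--Minkowski lemma: it shows that the dual system $(T^{(k)})^\top\vec y\ge 0$, $\langle\vec\alpha/\alpha_1,\vec y\rangle<0$ has no solution when $\norm{\vec\alpha}_1/\norm{\vec\alpha}_\infty\le k$, hence the primal $T^{(k)}\vec p=\vec\alpha/\alpha_1$, $\vec p\ge 0$ does. Your construction is plausible but underspecified---you have not said how to simultaneously enforce $\sum_n p_n=1$, $\tau_1^{(n)}=1$ for all $n$, and Hamming weight $\le k$---and the paper itself notes elsewhere that naive greedy assignments can fail on certain $\vec\alpha$. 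The Farkas route sidesteps all of this.

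Your lower bound has a genuine gap. The causal-cone idea cannot work: $\hat H_c(s)$ is an \emph{arbitrary} Hamiltonian on sensors plus ancilla, so $\hat U^\dagger(s)\hat g_j\hat U(s)$ can be supported on the full Hilbert space instantly---there is no light cone to exploit. The paper's key step, which you invoke only on the achievability side, is Lemma~\ref{lem:f11}: saturating $\mathcal F(\vec\theta)_{11}=t^2$ forces the probe at \emph{every} time $s$ to be an equal-weight superposition on qubit~1, so $\langle\hat\sigma_1^z\rangle_{\psi(s)}=0$ and, more strongly, $\hat g_1(s)$ acts on $\ket{\psi(0)}$ exactly as $\hat g_1$ does. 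This collapses the double time integral in $\mathcal F(\vec\theta)_{1j}$ to the single instantaneous correlator $t\int_0^t ds\,\langle\psi(s)|\hat\sigma_1^z\hat\sigma_j^z|\psi(s)\rangle$, which vanishes whenever qubit~$j$ is unentangled from qubit~1 (ancillas and control now drop out, since only the reduced state on qubits $1,j$ matters). Summing over $j$ with the indicator $E_j(s)$ then gives $\sum_j|\mathcal F(\vec\theta)_{1j}|\le t\int_0^t E(s)\,ds\le (k{-}1)t^2$ under $(k{-}1)$-partite entanglement, contradicting $\norm{\vec\alpha}_1/\norm{\vec\alpha}_\infty>k-1$. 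Without Lemma~\ref{lem:f11} you cannot reduce to an instantaneous correlator, and neither Cauchy--Schwarz nor a partition-of-$[0,t]$ argument will control the cross-time covariances $\mathrm{Cov}(\hat g_1(s),\hat g_j(s'))$ in terms of entanglement depth.
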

Theorem~\ref{thm:entanglement} justifies our claim that $d$-partite entanglement is not necessary unless $\norm{\vec{\alpha}}_{1}$ is large enough, i.e. in the case of measuring an average ($\alpha_i=\frac{1}{d}$ $\forall\, i$).
We now sketch the proof, providing full details in Appendix~\ref{sec:part2full}. The proof comes in two parts. First, using $k$-partite entangled states from the set of cat-like states considered above, we show 
the existence of an optimal protocol,
subject to the upper bound of Eq.~(\ref{eq:condfull}). 
Second, we show that, subject to the conditions in the theorem statement, there exists no optimal protocol using at most $(k-1)$-partite entanglement, proving the lower bound of Eq.~(\ref{eq:condfull}). 

\emph{Part 1.} Define $T^{(k)}$ to be the submatrix of $T$ with all columns $n$ such that $\sum_m |T_{mn}|>k$ are eliminated, which enforces that any protocol derived from $T^{(k)}$ uses only states that are at most $k$-partite entangled. Define System $A(k)$ as
 \begin{align}
     T^{(k)}\vec p^{(k)} &=\vec\alpha/\alpha_1, \label{eq:thm1a}\\
     \vec{p}^{(k)}&\geq0 \label{eq:thm1b}.
 \end{align}
Let $\vec\alpha'= \vec\alpha/\alpha_1$ and define System $B(k)$ as
\begin{align}
    (T^{(k)})^\top\vec y\geq 0 \label{eq:cond1},\\
    \langle\vec\alpha', \vec y\rangle < 0 \label{eq:cond2}.
\end{align}
By the Farkas-Minkowski lemma~\cite{farkas1902, dinh2014farkas}, System $A(k)$ has a solution if and only if System $B(k)$ does not, so
it is sufficient to show that System $B(k)$ 
does not have a solution if 
$\sum_{j>1}|\alpha_{j}'| \leq k-1$, where we used that $\alpha'_1=1$. This can be shown by contradiction.

\emph{Part 2.} The probe state must always be maximally sensitive to the first sensor qubit (see Lemma~\ref{lem:f11} in Appendix~\ref{sec:a_useful_lemma}), so $\mathcal{F}(\vec\theta)_{1j}$ only accumulates in magnitude when qubit $j$ is entangled with the first qubit (intuitively, Eq.~(\ref{eq:fim}) is similar to a connected correlator). Using this, we show that satisfying the condition in Eq.~(\ref{eq:saturabilitycond}) requires $\norm{\vec\alpha}_1/\norm{\vec\alpha}_\infty>k-1$. 
\hfill $\square $

Theorem~\ref{thm:entanglement} provides conditions for the existence of solutions to Eq.~(\ref{eq:prob}) with limited entanglement, but it is not constructive. To obtain an explicit protocol, simply solve the system of linear equations $T^{(k)}\vec p=\vec\alpha$. 

Of course, instantaneous entanglement is not the only resource that one might want to minimize. For instance, one might also be interested in minimizing average entanglement over the entire protocol. This possibility is considered in Section~\ref{sec:avg_entanglement}. Other, more general, resource restrictions can be handled by setting up a constrained optimization problem that picks out certain solutions to the system of linear equations $T^{(k)}\vec p=\vec\alpha$ subject to a cost function $\mathcal{E}(\vec p)$. For example, if certain pairs of sensors are easier to entangle than others, due to, for instance, their relative spatial location in the network, that could be encoded into $\mathcal{E}(\vec p)$. More complicated optimizations could also take into consideration the ordering of the states used in the protocols. For example, because our protocols require coherently applying CNOT gates to move between different families of entangled states, and these gates may be costly or error-prone resources, one might wish to find protocols that minimize the usage of these gates. We discuss this possibility and the potential tradeoff between minimizing entanglement and CNOT gates in Section~\ref{sec:CNOT}.

\section{Average Entanglement}\label{sec:avg_entanglement}
As mentioned above, one might also wish to minimize not just the size of the most-entangled family of states, but also the average entanglement used (given by weighting the size of each entangled family by the proportion of time that the family is used in the protocol). In this section (with some details deferred to Appendix~\ref{sec:non-echoed}), we show that there exists a class of optimal protocols, ones that we name ``non-echoed,'' that minimize this average entanglement. The formal definition is as follows:
\begin{definition}[Non-Echoed Protocols]
Consider some $\vec{\alpha}\in\mathbb{R}^d$ encoding a linear function of interest. Let $T$ be the matrix which describes our families of cat-like probe states, and let $\vec{p}$ specify a valid protocol such that $\vec p>0$ and $T\vec{p} = \vec{\alpha}/\norm{\vec{\alpha}}_{\infty}$. We say that the protocol defined by $\vec{p}$ is ``non-echoed'' if $\forall i$ such that $p_{i}$ is strictly greater than 0, $\sgn(T_{ij}) \in \{0, \sgn(\alpha_{j})\}$. 
\end{definition}
At any stage of a non-echoed protocol, letting the portion of the relative phase accumulated between the two branches of the probe state associated to the parameter $\theta_{i}$ be given by $c_{i}\theta_{i}$, two conditions must hold: (1) $|c_{i}| < |\alpha_{i}|$; (2) $\sgn(c_{i}) = \sgn(\alpha_{i})$. More intuitively, sensitivity to each parameter is accumulated ``in the correct direction'' at all times, meaning one does not use any sort of spin echo to produce a sensitivity to the function of interest, hence the name ``non-echoed.''

We now prove two useful statements about non-echoed protocols.
\begin{lemma}\label{lemma:avg_entanglement}
    Non-echoed protocols use minimium average entanglement.
\end{lemma}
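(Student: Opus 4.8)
The goal is to show that any non-echoed protocol minimizes the average entanglement $\mathcal{E}_{\mathrm{avg}}(\vec p) = \sum_n p_n \, w_n$, where $w_n = \sum_m |T_{mn}|$ is the entanglement depth of family $n$ (the number of qubits participating in the cat-like branch). The plan is to establish a lower bound on $\mathcal{E}_{\mathrm{avg}}$ that holds for \emph{every} valid protocol and then verify that non-echoed protocols saturate it. First I would extract the lower bound directly from the constraint $T\vec p = \vec\alpha/\alpha_1$: summing the absolute values of the $d$ component equations and using the triangle inequality $|\sum_n T_{mn} p_n| \le \sum_n |T_{mn}| p_n$ (valid since $p_n \ge 0$), one gets
\begin{equation}
  \frac{\norm{\vec\alpha}_1}{\norm{\vec\alpha}_\infty} = \sum_m \left| \sum_n T_{mn} p_n \right| \le \sum_m \sum_n |T_{mn}| p_n = \sum_n p_n w_n = \mathcal{E}_{\mathrm{avg}}(\vec p),
\end{equation}
so every valid protocol has $\mathcal{E}_{\mathrm{avg}} \ge \norm{\vec\alpha}_1/\norm{\vec\alpha}_\infty$.

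Next I would show the non-echoed condition forces equality in every one of those triangle inequalities. The triangle inequality $|\sum_n T_{mn} p_n| \le \sum_n |T_{mn}| p_n$ is saturated exactly when all nonzero terms $T_{mn} p_n$ share a common sign. For a non-echoed protocol, $p_n > 0$ implies $\sgn(T_{mn}) \in \{0, \sgn(\alpha_m)\}$, so for fixed $m$ every nonzero contribution has sign $\sgn(\alpha_m)$; hence each row's inequality is tight and $\mathcal{E}_{\mathrm{avg}}(\vec p) = \norm{\vec\alpha}_1/\norm{\vec\alpha}_\infty$, the global minimum. (One should also invoke the companion existence statement — that non-echoed protocols exist for every $\vec\alpha$ — which is presumably the other of the "two useful statements" mentioned; if it is needed here I would cite it, otherwise it is logically separate.) I would close by noting that this also pins down the minimum average entanglement value exactly, namely $\norm{\vec\alpha}_1/\norm{\vec\alpha}_\infty$, and that it is consistent with Theorem~\ref{thm:entanglement}: the maximum entanglement depth is at least $\lceil \norm{\vec\alpha}_1/\norm{\vec\alpha}_\infty \rceil$, which dominates the average, as it must.

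\textbf{Main obstacle.} The mathematical core above is short; the subtlety I expect is bookkeeping around what "average entanglement" means as a resource and making the weighting rigorous — in particular, whether $w_n$ should be the literal number of entangled qubits in branch $n$ (i.e. $\sum_m |T_{mn}|$, counting $\tau_1 = 1$) and whether the time weighting $p_n$ (fractions of total time) is the physically correct measure versus something normalized differently. I would need to state the definition of $\mathcal{E}_{\mathrm{avg}}$ cleanly up front so the triangle-inequality argument lands, and I would need to be careful that the protocols genuinely realize entanglement depth exactly $w_n$ in each stage (the cat-like states $\ket{\psi(\vec\tau^{(n)};\varphi)}$ are GHZ-like on their support, so every qubit in the support is genuinely entangled — but only if $w_n \ge 2$; a family with $w_n = 1$ is a product state, consistent with the formula). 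A secondary point worth a sentence: the bound is independent of the ordering of the families and of the control details, so "average entanglement" is well-defined protocol-intrinsically, which is what makes the minimality claim meaningful.
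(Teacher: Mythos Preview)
Your argument is correct and self-contained for the family of cat-state protocols (those specified by a nonnegative $\vec p$ with $T\vec p=\vec\alpha/\alpha_1$): the triangle-inequality chain gives the lower bound $\norm{\vec\alpha}_1/\norm{\vec\alpha}_\infty$ on $\mathcal{E}_{\mathrm{avg}}$, and the non-echoed sign condition is exactly the equality case. The paper's proof reaches the same value by a slightly different, more compressed route: it left-multiplies $T\vec p=\vec\alpha/\norm{\vec\alpha}_\infty$ by $\sgn(\vec\alpha)^\top$ and observes that, \emph{for a non-echoed protocol}, $\sgn(\vec\alpha)^\top T$ is precisely the row vector $\vec w^\top$ of column weights, so $\vec w^\top\vec p=\norm{\vec\alpha}_1/\norm{\vec\alpha}_\infty$ drops out in one line. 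For the lower-bound half, however, the paper does \emph{not} use your triangle-inequality argument on $T\vec p$; instead it invokes Part~2 of the proof of Theorem~\ref{thm:entanglement} (in the appendix), which bounds $\sum_j|\mathcal F(\vec\theta)_{1j}|$ by the time-integrated entanglement with qubit~1 for \emph{any} optimal protocol---arbitrary controls, arbitrary ancilla---not only those in the $T\vec p$ family.

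That difference in scope is the one point to watch. Your lower bound as written applies only to protocols parameterized by $T\vec p$; the paper's claim (and its proof) is that $\norm{\vec\alpha}_1/\norm{\vec\alpha}_\infty$ lower-bounds the average entanglement of \emph{every} optimal protocol. If the lemma is meant in that stronger sense, you would need to replace your row-wise triangle inequality with the Fisher-information argument (Eqs.~(\ref{eq:indicators})--(\ref{eq:totalentanglement}) in the appendix) or at least cite it, as the paper does. If you are content to state minimality only within the cat-state family, your proof is complete and arguably cleaner than the paper's, since it derives both halves from the same inequality rather than outsourcing the lower bound.
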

\begin{proof}
    We start with $T\vec{p} = \vec{\alpha}/\norm{\vec{\alpha}}_{\infty}$. 
    Then
    \begin{align}
        \norm{\vec{\alpha}}_{1}/\norm{\vec{\alpha}}_{\infty} &= \sgn(\vec{\alpha})^{\top}(T\vec{p}) \nonumber \\
        &= (\sgn(\vec{\alpha})^{\top}T)\vec{p} = \vec{w}^{\top}\vec{p},
    \end{align}
    where we have defined $w_{j} = \sum_{i}|T_{ij}|$ to be the sum of the absolute value of the elements of the $j$th column of $T$. That is, $w_{j}$ represents how entangled the corresponding cat-like family of states is. But, then, clearly $\vec{w}^{\top}\vec{p}$ is the average entanglement of the entire protocol. Furthermore, the second half of the proof of Theorem~\ref{thm:entanglement}, given in Appendix~\ref{sec:part2full} shows that the minimum average entanglement of any optimal protocol is given by $\norm{\vec{\alpha}}_{1}/\norm{\vec{\alpha}}_{\infty}$ (see the discussion after the completion of the proof).
\end{proof}
The intuition behind this lemma is that if one always accumulates phase in the ``correct direction,'' then the total amount of entanglement used over the course of the protocol must be minimized, as any extra entanglement would lead to becoming overly sensitive to some parameter, which would require some sort of echo to correct. 

We further have the following theorem, which can be viewed as an extension of Theorem~\ref{thm:entanglement}.
\begin{theorem}\label{thm:avg_entanglement}
    For any $\vec{\alpha} \in \mathbb{R}^{d}$, there exists an optimal non-echoed protocol with minimum instantaneous entanglement for measuring $q = \vec{\alpha}\cdot\vec{\theta}$. 
\end{theorem}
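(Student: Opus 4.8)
The plan is to exhibit an explicit nonnegative solution $\vec p$ to $T\vec p=\vec\alpha/\norm{\vec\alpha}_\infty$ that is simultaneously (i) supported only on columns $\vec\tau^{(n)}$ with $\sum_m|\tau^{(n)}_m|\le k$ (minimum instantaneous entanglement, by Theorem~\ref{thm:entanglement}) and (ii) non-echoed, i.e. every used column $\vec\tau^{(n)}$ satisfies $\sgn(\tau^{(n)}_m)\in\{0,\sgn(\alpha_m)\}$. Without loss of generality rescale so $\alpha_1=\norm{\vec\alpha}_\infty=1$ and (by flipping the sign of qubit $m$, which flips $\tau_m\mapsto-\tau_m$ and leaves the cat-state structure intact) assume $\alpha_m\ge 0$ for all $m$; then ``non-echoed'' just means every used $\vec\tau^{(n)}\in\{0,1\}^d$ with $\tau^{(n)}_1=1$. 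So the task reduces to writing $\vec\alpha'=\vec\alpha/\norm{\vec\alpha}_\infty$, a vector in $[0,1]^d$ with first entry $1$, as a convex combination of $0/1$ vectors each having first coordinate $1$ and at most $k$ ones, where $k=\lceil\norm{\vec\alpha}_1/\norm{\vec\alpha}_\infty\rceil=\lceil\sum_m\alpha'_m\rceil$.

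The key construction I would use is a greedy/``staircase'' decomposition. Sort the coordinates $\alpha'_2\ge\alpha'_3\ge\cdots\ge\alpha'_d$ (coordinate $1$ is pinned at $1$). Think of stacking, over the $d$ coordinates, horizontal ``slabs'' of height equal to the successive gaps $\alpha'_j-\alpha'_{j+1}$; the $0/1$ indicator vector of the support of each slab (which coordinates are still ``active'' at that height) always contains coordinate $1$, and the width of the bottom slab is $d$, decreasing as we go up. Rescaling each slab's indicator by the slab's height gives an expression $\vec\alpha'=\sum_\ell h_\ell \vec\chi_\ell$ with $h_\ell\ge0$ and $\sum_\ell h_\ell=1$ (the total height over coordinate $1$ is $\alpha'_1=1$), i.e. a convex combination of $0/1$ vectors $\vec\chi_\ell$ each containing coordinate $1$. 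The issue is that a wide low slab has more than $k$ ones. To fix this I would split each $\vec\chi_\ell$ with $w:=|\vec\chi_\ell|>k$ into a convex combination of $0/1$ vectors of weight exactly $k$, each still containing coordinate~$1$: since $\sum_m\alpha'_m\le k$, the ``extra mass'' is limited, and one can verify (a small averaging argument: any point of the scaled simplex $\{\vec x\in[0,1]^{w-1}: \sum x_i = k-1\}$, after fixing coordinate~$1$, is a convex combination of $\{0,1\}$ vectors with exactly $k-1$ ones) that this is always possible. Concretely, I would instead run the greedy procedure directly at the level of columns: repeatedly pick the $0/1$ vector $\vec\tau$ of weight $\min(k, \text{number of active coordinates})$ supported on coordinate $1$ together with the $k-1$ currently-largest other residual coordinates, assign it weight equal to how much can be ``peeled off'' before either the residual total drops to fit in fewer coordinates or some coordinate hits $0$, subtract, and iterate; termination and nonnegativity follow because the residual $\ell_1$-norm strictly decreases and stays $\le$ its initial value, and feasibility of each step uses exactly $\sum_m\alpha'_m\le k$.

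I expect the main obstacle to be the bookkeeping that makes the greedy peeling both \emph{stay within weight $k$} and \emph{terminate with all residuals zero} — in particular, showing that at every stage the residual vector $\vec r$ (with $r_1$ the residual on the pinned coordinate, $r_1\ge \max_{m}r_m$, and $\sum_m r_m\le k\, r_1$... wait, $\le k$ absolute, normalized so the cap is inherited) can still be covered by weight-$\le k$ nonnegative $0/1$ combinations with coordinate $1$ always present. The clean way to package this is an induction on the number of distinct nonzero residual values, peeling one ``staircase level'' per step and invoking, within a level of uniform height, a Birkhoff-type statement that a uniform vector $h\cdot(1,1,\dots,1)$ of length $w>k$ (first coordinate always kept) is a convex combination of weight-$k$ indicators containing coordinate~$1$ — which holds precisely when $h\cdot w \le k\cdot(\text{total height }1)$ is compatible, i.e. under $\sum_m\alpha'_m\le k$. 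Once the decomposition $\vec\alpha'=\sum_n p_n\vec\tau^{(n)}$ with each $|\vec\tau^{(n)}|\le k$, $\tau^{(n)}_1=1$, $\vec\tau^{(n)}\in\{0,1\}^d$ is in hand, undoing the coordinate sign flips turns it back into a non-echoed $\vec p$ over the original $\mathcal{T}$, and Theorem~\ref{thm:entanglement} certifies that weight $k$ is also the minimum possible instantaneous entanglement, completing the proof. $\hfill\square$
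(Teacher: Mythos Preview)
Your approach differs substantially from the paper's: the paper proves existence non-constructively via the Farkas--Minkowski lemma (restricting $T$ to the non-echoed, weight-$\le k$ columns $T^{(k)}_+$ and showing the dual system $(T^{(k)}_+)^\top \vec y\ge 0$, $\langle\vec\alpha,\vec y\rangle<0$ has no solution, by a short case split on how many entries of $\vec y$ are negative), whereas you attempt an explicit convex decomposition. That is a legitimate alternative strategy, and a clean constructive proof along these lines does exist---but not via the specific constructions you wrote down.

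The gap is in the step where you ``split each $\vec\chi_\ell$ with $w>k$ into a convex combination of weight-$k$ indicators containing coordinate~$1$.'' This is impossible slab-by-slab: if $\vec\chi_\ell=(1,\dots,1)$ on $w>k$ coordinates is written as $\sum_j q_j\vec\tau^{(j)}$ with $\sum_j q_j=1$, $\tau^{(j)}_1=1$, and $|\vec\tau^{(j)}|\le k$, then each coordinate $m>1$ receives $\sum_j q_j\tau^{(j)}_m\le 1$, with equality only if $\tau^{(j)}_m=1$ for every used $j$; demanding this for all $w-1$ such coordinates forces $|\vec\tau^{(j)}|=w>k$. The global constraint $\sum_m\alpha'_m\le k$ simply does not descend to individual slabs. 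Your fallback greedy peel has the same defect. Take $\vec\alpha'=(1,0.6,0.6,0.6)$, so $k=3$: peeling $(1,1,1,0)$ until a coordinate hits zero removes weight $0.6$, leaving residual $(0.4,0,0,0.6)$ with $r_4>r_1$; since every allowed column has $\tau_1=1$, the remaining budget on coordinate~$1$ is only $0.4$, so coordinate~$4$ can receive at most $0.4$ more and never reaches $0.6$. A valid decomposition does exist here, e.g.\ $0.3\,(1,1,1,0)+0.3\,(1,1,0,1)+0.3\,(1,0,1,1)+0.1\,(1,0,0,0)$, but your procedure does not find it, and your stated invariant (``$\sum_m r_m\le k\,r_1$\dots wait, $\le k$ absolute'') is not the right one to maintain.

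The easy repair is to apply your own hypersimplex observation to the \emph{whole} vector rather than slab-by-slab: after stripping coordinate~$1$, the target $(\alpha'_2,\dots,\alpha'_d)$ lies in $P=\{\vec x\in[0,1]^{d-1}:\sum_i x_i\le k-1\}$, whose constraint matrix is totally unimodular, so the vertices of $P$ are exactly the $0/1$ vectors of weight $\le k-1$. Any point of $P$ is then a convex combination of such vertices, and reattaching coordinate~$1$ gives the desired non-echoed, weight-$\le k$ decomposition directly. With that one-line correction your route goes through and, unlike the paper's Farkas argument, actually exhibits a protocol.
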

The proof of this theorem is given in Appendix~\ref{sec:non-echoed}, and it proceeds in a very similar way to the proof of Theorem~\ref{thm:entanglement}. The main difference is that one also restricts the allowed state families to be those with the correct sign so as to be non-echoed. And, analogously to how one can find a protocol with minimum entanglement, one can also obtain a solution that minimizes average entanglement by restricting $T$ to only include columns such that $\sgn({T}_{ij})=\sgn(\alpha_i)$ for all $i,j$ and then solving the corresponding system of linear equations.

\section{CNOT Costs of Minimum Entanglement Protocols}\label{sec:CNOT}

We now address another resource of potential interest: how many entangling (CNOT) gates are required to perform our protocols with a focus on the minimum entanglement protocols. 

We will again assume, for simplicity, that $\norm{\vec{\alpha}}_{\infty} = \alpha_{1}  = 1 > |\alpha_{2}| \geq |\alpha_{3}| \geq \dots \geq |\alpha_{d}|$. Furthermore, without loss of generality, we will adopt the convention that an optimal protocol specified by a $\vec p\geq 0$ such that $T\vec{p}=\vec\alpha$ begins by preparing a state in the family described by the first column of $T$ and evolving for time $p_{1}t$, and then proceeds to the appropriate state (i.e., the one with phase $p_{1}t$) in the family described by the second column, then evolving for time $p_{2}t$, and so on, until eventually moving to the measurement state. If $p_{i} = 0$, the corresponding state family is skipped and not prepared. By construction, the number of CNOT gates needed to perform this protocol is the number of gates required to generate the first state, plus the number needed to convert from the first state to the second state, and so on. Finally, one should add the number of gates needed to prepare the measurement state, which disentangles all qubits, from the final probe state~\footnote{These gates are not strictly necessary. See footnote [48].}. The number of gates required to move from state $i$ to state $i+1$ corresponds to the number of elements of $\vec{\tau}_{i}$ that are $\pm 1$ but 0 in $\vec{\tau}_{i+1}$ and vice versa. In what follows, we will often consider only the gates that are used to convert between probe states (i.e., we will not consider the initial state preparation or final measurement preparation). This is physically motivated by the fact that these intermediate gates may be more difficult to perform or may be more susceptible to noise. 
Furthermore, assuming one is interested in the value of $q$ at some particular moment (and not, say, continuously), one might be free to prepare and purify the initial probe state in advance of the actual sensing task, which also justifies ignoring the initial CNOT cost. 

Assume that $\overline{N}$ states used in the protocol, i.e. $\vec{p}$ is such that it contains at most $\overline{N}$ nonzero elements. It is clear that at most $\mathcal{O}(\overline{N}^{2})$ CNOT gates are needed. However, this is not necessarily optimal. In fact, Ref.~\cite{eldredge2018optimal} provides a protocol that uses $d$ states and only $(d-1) = \mathcal{O}(d)$ intermediate CNOT gates. This ``disentangling protocol'' consists of using a maximally entangled Greenberger-Horne-Zeilinger state (up to $\hat{\sigma}^x$ rotations) for a time $|\alpha_{d}|t$, then disentangling the last qubit and using the $(d-1)$-entangled state for time $(|\alpha_{d-1}|-|\alpha_{d}|)t$ before disentangling the next-to-last qubit and so on until reaching the final state corresponding to $\vec{\tau} = (1, 0, \dots, 0)^\top$. This final state is used for time $(|\alpha_{1}|-|\alpha_{2}|)t = (1-|\alpha_{2}|)t$. The disentangling protocol does not minimize the instantaneous entanglement, but it does minimize average entanglement (as it is a non-echoed protocol---see Section~\ref{sec:avg_entanglement}). 

Even more interestingly, Ref.~\cite{eldredge2018optimal} also provides a protocol, which we refer to as the ``echoing'' protocol, that uses \emph{zero} intermediate CNOT gates. It proceeds by using $d$ exclusively maximally entangled states (thereby minimizing neither average nor, in most cases, instantaneous entanglement), but judiciously echoing away the extra sensitivity that this extra entanglement induces.

To illustrate these protocols in the language of the current paper, we provide $T$ and $\vec{p}$ (where, for simplicity of notation, we restrict $T$ and $\vec{p}$ to the states that are used for a non-zero fraction of time) for the case $d = 8$ and $\alpha_{i} > 0$:
\begin{widetext}
\begin{align}
    T^{\text{disentangling}} &= \begin{pmatrix}
    1 & 1 & 1 & 1 & 1 & 1 & 1 & 1 \\
    1 & 1 & 1 & 1 & 1 & 1 & 1 & 0 \\
    1 & 1 & 1 & 1 & 1 & 1 & 0 & 0 \\
    1 & 1 & 1 & 1 & 1 & 0 & 0 & 0 \\
    1 & 1 & 1 & 1 & 0 & 0 & 0 & 0 \\
    1 & 1 & 1 & 0 & 0 & 0 & 0 & 0 \\
    1 & 1 & 0 & 0 & 0 & 0 & 0 & 0 \\
    1 & 0 & 0 & 0 & 0 & 0 & 0 & 0
    \end{pmatrix},
    &\vec{p}^{\text{disentangling}} &= \begin{pmatrix}
    \alpha_{8}\\
    \alpha_{7}-\alpha_{8}\\
    \alpha_{6}-\alpha_{7}\\
    \alpha_{5}-\alpha_{6}\\
    \alpha_{4}-\alpha_{5}\\
    \alpha_{3}-\alpha_{4}\\
    \alpha_{2}-\alpha_{3}\\
    \alpha_{1}-\alpha_{2}
    \end{pmatrix} 
\end{align}
and
\begin{align}
    T^{\text{echoing}} &= \begin{pmatrix}
    1 & 1 & 1 & 1 & 1 & 1 & 1 & 1 \\
    1 & 1 & 1 & 1 & 1 & 1 & 1 & -1 \\
    1 & 1 & 1 & 1 & 1 & 1 & -1 & -1 \\
    1 & 1 & 1 & 1 & 1 & -1 & -1 & -1 \\
    1 & 1 & 1 & 1 & -1 & -1 & -1 & -1 \\
    1 & 1 & 1 & -1 & -1 & -1 & -1 & -1 \\
    1 & 1 & -1 & -1 & -1 & -1 & -1 & -1 \\
    1 & -1 & -1 & -1 & -1 & -1 & -1 & -1
    \end{pmatrix} ,
    &\vec{p}^{\text{echoing}} &= \begin{pmatrix}
    \frac{1+\alpha_{8}}{2}\\
    \frac{\alpha_{7}-\alpha_{8}}{2}\\
    \frac{\alpha_{6}-\alpha_{7}}{2}\\
    \frac{\alpha_{5}-\alpha_{6}}{2}\\
    \frac{\alpha_{4}-\alpha_{5}}{2}\\
    \frac{\alpha_{3}-\alpha_{4}}{2}\\
    \frac{\alpha_{2}-\alpha_{3}}{2}\\
    \frac{\alpha_{1}-\alpha_{2}}{2}
    \end{pmatrix}.
\end{align}
\end{widetext}
In the case of the disentangling protocol, the number of CNOTs needed is heavily dependent on the ordering of the states. For example, consider, instead, ordering the states in the following way:
\begin{equation}
    T^{\text{disentangling}} = \begin{pmatrix}
    1 & 1 & 1 & 1 & 1 & 1 & 1 & 1 \\
    1 & 0 & 1 & 1 & 1 & 1 & 1 & 1 \\
    1 & 0 & 1 & 0 & 1 & 1 & 1 & 1 \\
    1 & 0 & 1 & 0 & 1 & 0 & 1 & 1 \\
    1 & 0 & 1 & 0 & 1 & 0 & 1 & 0 \\
    1 & 0 & 1 & 0 & 1 & 0 & 0 & 0 \\
    1 & 0 & 1 & 0 & 0 & 0 & 0 & 0 \\
    1 & 0 & 0 & 0 & 0 & 0 & 0 & 0
    \end{pmatrix}.
\end{equation}
Here, the number of CNOTs required is now $(d-1) + (d-2) + \dots + 1 = \Theta(d^{2})$. Thus, it is not only the choice of states that affects the CNOT cost of a protocol, but also their ordering. Naively, finding an optimal set of states and their optimal ordering is a difficult problem, as if one finds a protocol using $\overline{N}$ states, there are $\overline{N}!$ orders to check.

While we were unable to find a general solution to this optimization problem, numerics allow us to provide a pragmatic analysis of the cost. To begin, we considered the naive approach of finding a random (non-echoed) minimum entanglement solution using $d$ states for random problem instances and, then, using this solution set, we brute-force searched over all column orderings of $T$ restricted to families of states specified by this solution to find an optimal ordering in terms of CNOT cost. This was done for $d\in[3,10]$ sensors with twenty random instances each. Without loss of generality, the random problem instances were taken to have all positive coefficients. We observe a CNOT cost scaling $\sim d^2$, indicating that a random minimum entanglement solution, even with optimal ordering, does not have the optimal linear in $d$ scaling. See Figure~\ref{fig:cnot_data}.

Consequently, more nuanced algorithms for finding a minimum entanglement solution with better CNOT costs are desirable. To this end, we considered a greedy algorithm that yields a $\Theta(d)$ CNOT cost whenever it does not fail. The algorithm works by building up the full sensitivity to one parameter before switching coherently to a new state family (in this way, it is non-echoed---see Section~\ref{sec:avg_entanglement}). Consequently, each time we switch to a new state, one sensor qubit can be disentangled and never re-entangled. In particular, we seek to build up sensitivity to the parameters according to their weight in $q$, i.e. we build up sensitivity to parameters going from the smallest corresponding $|\alpha_j|$ to the largest. The full algorithm is completed in at most $d$ steps~\footnote{Code is available upon request.}. 

However, this greedy algorithm can fail to produce a valid protocol, as it does not enforce the condition that $\norm{\vec p}_1=1$. This condition will be violated for some functions---typically those with many coefficients with approximately equal magnitude. Still, when it works, this algorithm succeeds in producing CNOT-efficient minimum entanglement protocols, as shown in Figure~\ref{fig:cnot_data}. Finding more general algorithms that always succeed for this task remains an interesting open problem.

\begin{figure}
    \centering
    \includegraphics[width=0.4\textwidth]{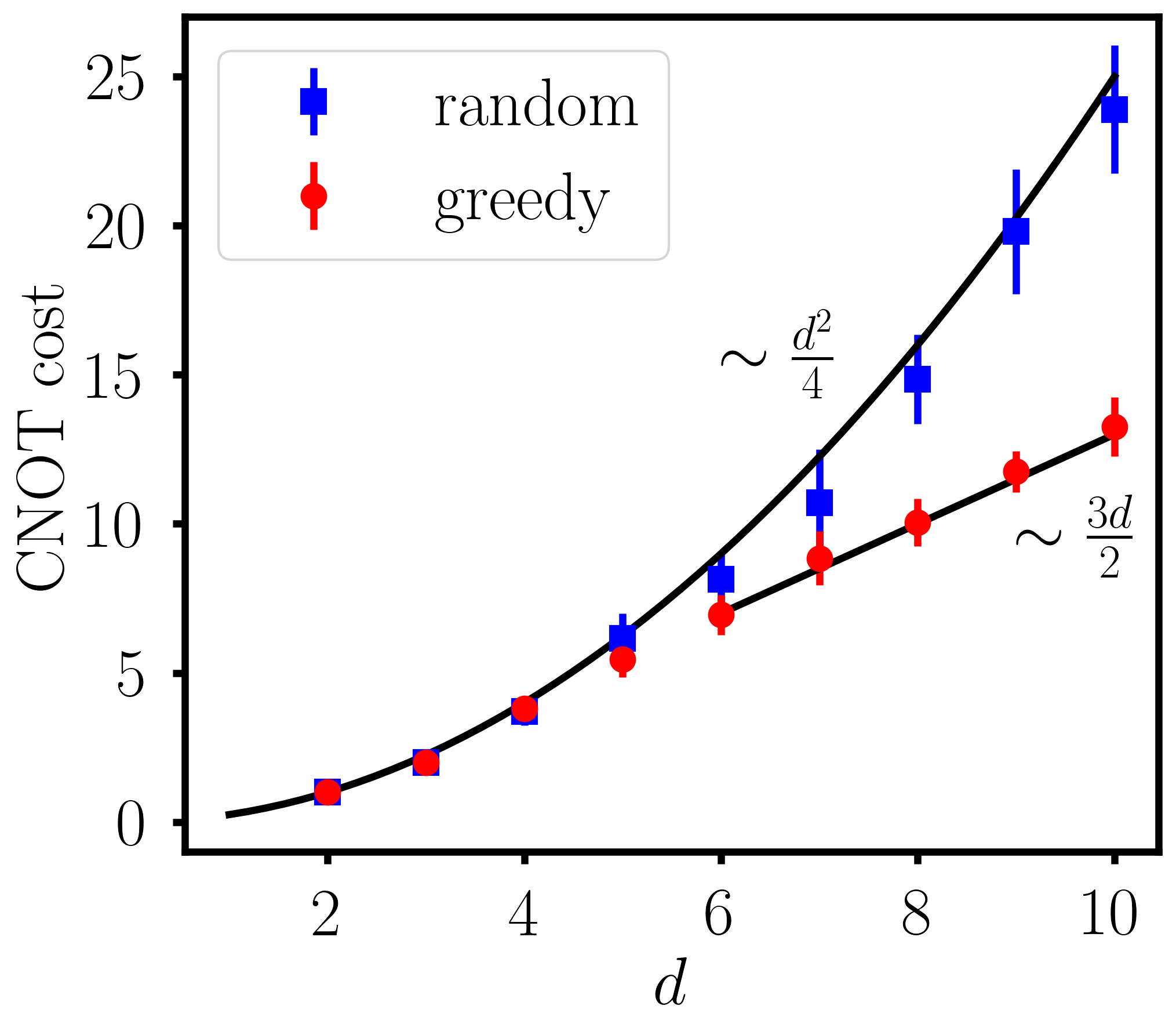}
    \caption{CNOT costs versus number of sensors $d$ for minimum entanglement protocols using $d$ optimally ordered states chosen either randomly or via the greedy algorithm described above. Twenty randomly chosen instances (that do not fail) to yield a valid protocol via the greedy algorithm. When it returns a valid protocol, the greedy algorithm recovers optimal linear scaling with $d$ for the CNOT cost, whereas randomly chosen states have quadratic scaling, even with optimal state ordering.}
    \label{fig:cnot_data}
\end{figure}

Independent of the algorithm used to minimize the CNOT count of an optimal protocol, the takeaway message is the same: there is an apparent tradeoff between entanglement- and gate-based resources. The disentangling protocol minimizes average entanglement, but not necessarily instantaneous entanglement, and requires only $\mathcal{O}(d)$ intermediate entangling gates; the echoing protocol uses maximal entanglement, but requires only single-particle intermediate gates. Protocols that minimize instantaneous entanglement do so at the cost of more intermediate entangling gates. Depending on the primary sources of error or the physical constraints on any given quantum sensor network implementation, one of these resources might be more important to minimize than the other. In general, determining the optimal CNOT scaling for protocols that minimize instantaneous and/or average entanglement is a crucial open question for future work.

\section{Time-Independent Protocols}
Another approach to constructing protocols is to use so-called probabilistic protocols. These protocols eschew control and instead exploit the convexity of the quantum Fisher information by staying in one family throughout any given run of the protocol, but by letting this family vary over different runs. Intuitively, each family is sensitive to a different function $q_n$ such that $q=\sum_{n=1}^{\overline{N}} p_n q_n$, where $\overline{N}$ is the number of families from $\mathcal{T}$ used in the protocol, and $p_{n}$ is the frequency that family $n$ is used. In this way, one can create an estimator for $q$ using those for $q_n$. In order to generate a Fisher information matrix satisfying Eq.~(\ref{eq:saturabilitycond})~\cite{eldredge2018optimal, gross2020one}, the $p_{n}$ should come from a solution to Eq.~(\ref{eq:prob}). These protocols have the advantage of requiring no control, but, unfortunately, suffer worse scaling with $d$ than ours for generic functions when the available resources are comparable. 

In particular, to fairly account for resources, we must fix a total time $t$ to perform \emph{all} stages of our protocol. Therefore, when considering a probabilistic protocol that uses multiple families from $\mathcal{T}$, but does not switch coherently between them, we must assign a time $t_n$ to family $n$ such that
\begin{equation}
    \sum_{j=1}^{\overline N} t_n=t.
\end{equation}
Note, we have used the fact that no stages of a probabilistic protocol with the families in $\mathcal{T}$ can be performed simultaneously. One could imagine protocols that parallelize the measurement of some $q_j$ that involve disjoint sets of sensors. However, such protocols are necessarily non-optimal given Lemma~\ref{lem:f11} in Appendix~\ref{sec:a_useful_lemma}, which says that any optimal protocol requires entanglement with the first qubit at all times.

We can bound the maximum of the Fisher information matrix element $\mathcal{F}(\vec\theta)_{11}$ obtainable via such a probabilistic protocol as
\begin{align}
    \max \mathcal{F}(\vec\theta)_{11} \leq  &\max_{p_n, t_n} \sum_{n=1}^{\overline N} p_n t_n^2, \nonumber \\
    &\text{subject to: } \sum_{n=1}^{\overline N} t_n=t, \nonumber\\
    & \qquad\qquad\quad \sum_{n=1}^{\overline N} p_n=1.
\end{align}
where we used that $\tau^{(n)}_1=1$ for all $n$. The inequality arises due to the fact that the maximization problem on the right hand side of the inequality does not enforce that $T\vec p =\vec\alpha/\alpha_1$. We could add this as an additional constraint, but it will not be necessary.

To perform the necessary optimization, consider the Lagrangian:
\begin{equation}
    \mathcal{L}=\sum_{n=1}^{\overline{N}} p_n t_n^2 + \gamma_1\left (t-\sum_{n=1}^{\overline N} t_n\right)+\gamma_2\left (1-\sum_{n=1}^{\overline N} p_n\right ),
\end{equation}
where $\gamma_1, \gamma_2$ are Lagrange multipliers. Therefore, we obtain the system of equations
\begin{align}
    2p_nt_n -\gamma_1 &= 0, \quad (\forall\, n), \nonumber\\ 
    t_n^2 -\gamma_2 &= 0, \quad (\forall\, n), \nonumber\\
    \sum_{n=1}^{\overline N} t_n&=t, \nonumber \\
    \sum_{n=1}^{\overline N} p_n&=1,
\end{align}
which can be solved to yield the solution
\begin{equation}
    \max_{p_n, t_n} \sum_{n=1}^{\overline N} p_n t_n^2 = \frac{t^2}{\overline N^2},
\end{equation}
for $p_n=1/\overline{N}$ and $t_n=t/\overline N$ for all $n$. Therefore,
\begin{equation}
    \mathcal{F}(\vec\theta)_{1j}\leq \frac{t^2}{\overline N^2}, \quad (\forall j),
\end{equation}
which clearly fails to achieve the saturability condition for $j=1$, unless $\overline N=1$, which is only possible for a very small set of functions (generic functions require $\overline{N}$ that scale nontrivially with $d$). Therefore, provided one considers cases where each $q_n$ must be learned sequentially (which is a requirement for any possibly optimal protocol via Lemma~\ref{lem:f11}), we fail to achieve saturability even up to a $d$-independent constant for generic functions via time-independent protocols.

Note that we have, for simplicity, again restricted ourselves to the case where $\vec{\alpha}$ has a single maximal magnitude element. The more general proof follows almost identically, with some notational overhead, when generalizing beyond this condition.

\section{Conclusion and Outlook}
We have proven that maximally entangled states are not necessary for the optimal measurement of a linear function with a quantum sensor network unless the function is sufficiently uniformly supported on the unknown parameters. While the uniformly distributed case has been considered extensively in the literature, as it provides the largest possible separation in performance between entangled and separable protocols, there is no \emph{a priori} reason why one should be interested in only these sorts of quantities. Our results demonstrate that while the precision gains to be had are less away from the uniformly distributed regime, the required resources are also less. This result is of particular relevance to the development of near-term quantum sensor networks, where creating large-scale entangled states may not be practical. Furthermore, while algebraic approaches like the one we consider here have been used before to generate bounds for the function estimation problem~\cite{eldredge2018optimal, qian2020optimal}, leveraging this approach to derive protocols that achieve these bounds subject to various experimental constraints is a new and widely applicable technique. We emphasize again that these results are also useful in more general settings, such as the measurement of analytic functions, as these measurements reduce to the case studied here \cite{qian2019heisenberg, qian2020optimal, bringewatt2021protocols}. 

To the best of the authors' knowledge, all information-theoretically optimal protocols for the estimation of a single linear function that are currently in the literature are subsumed by the framework that we develop in this work. What protocol one chooses to use will depend heavily on the experimental context; if decoherence is more problematic than the number of entangling gates that one must perform, then minimum entanglement protocols will be preferred to the conventional protocols. However, if decoherence is mild, but two-qubit gates introduce significant errors, then a protocol such as the echoing protocol presented in Ref.~\cite{eldredge2018optimal} will be preferred.
Consequently, the extent to which minimum entanglement protocols are more or less valuable than their more highly entangled counterparts depends on the details of the physical implementation of a quantum sensor network. Either way, the development of a framework to address these questions is, in of itself, an important contribution of this work.

We also briefly point out one more resource-related constraint of protocols that rely on time-dependent control (whether in the form of $\hat{\sigma}^{x}$ gates, CNOT gates, or others): these protocols require precise timing of the gate applications. Uncertainty in the timing leads directly to a systematic error in the function being measured. Importantly, however, this timing issue is a limitation of all known optimal protocols for the linear function estimation task (see e.g. Ref.~\cite{eldredge2018optimal}). We therefore view these limitations as more pertinent to experimental implementation than the theory of resource tradeoffs that we are considering here. 

So far, we have not discussed the situation where we are constrained to $k$-partite entanglement, but $k$ is not sufficient to achieve optimality (for any protocol) via Theorem~\ref{thm:entanglement}. We propose the following protocol for such a scenario: Let $R$ be a partition of the sensors into independent sets where we do not allow entanglement between sets and allow, at most, $k$-partite entanglement within each $r\in R$. Let $\vec\alpha^{(r)}$ denote $\vec\alpha$ restricted to $r$. Pick the optimal $R$ such that the condition of Theorem~\ref{thm:entanglement} is satisfied for all $r$; that is, we ensure that \emph{within} each independent set we obtain the optimal variance for the linear function restricted to that set. The result is a variance
\begin{equation}\label{eq:boundkrest}
     \mathcal{M}=\frac{1}{t^2}\sum_{r\in{R}} \norm{\vec\alpha^{(r)}}_\infty^2.
\end{equation}
The optimal $R$ is a partition of the sensors into contiguous sets (assuming for simplicity that $|\alpha_i|\geq |\alpha_j|$ for $i<j$) such that for all $r\in R$, $\sum_{i\in r} |\alpha_i|/\max_{i\in r}|\alpha_i|\leq k$, satisfying Theorem~\ref{thm:entanglement}.  
We conjecture that this protocol is optimal, and it is clearly so if partitioning the problem into independent sets is optimal. However, one could imagine protocols that use different partitions for some fraction of the time. Intuitively, this should not improve the performance, but we leave analyzing this as an open question. 

Finally, no optimal time-independent protocols for arbitrary linear functions exist in the literature. Finding such protocols (or proving their non-existence) remains an open problem of interest. 

\vspace{1em}
\acknowledgments

We thank Michael Foss-Feig, Zachary Eldredge, Tarushii Goel, Pradeep Niroula, Luis Pedro Garc\'ia-Pintos, and Michael Gullans for helpful discussions. We thank the anonymous referee who pointed out that the CNOT cost of our minimum entanglement protocols deserved a more detailed analysis. This work supported in part by AFOSR MURI, AFOSR, DARPA SAVaNT ADVENT, NSF PFCQC program, ARO MURI, DoE ASCR Accelerated Research in Quantum Computing program (award No.~DE-SC0020312), NSF QLCI (award No.~OMA-2120757), U.S.~Department of Energy Award No.~DE-SC0019449, and the DoE ASCR Quantum Testbed Pathfinder program (award No.~DE-SC00119040). Support is also acknowledged from the U.S.~Department of Energy, Office of Science, National Quantum Information Science Research Centers, Quantum Systems Accelerator.  This research was supported in part by the Heising-Simons Foundation, the Simons Foundation, and National Science Foundation Grant No.~NSF PHY-1748958. J.B.~acknowledges support by the U.S.~Department of Energy, Office of Science, Office of Advanced Scientific Computing Research, Department of Energy Computational Science Graduate Fellowship (award No.~DE-SC0019323).

\bibliography{main}

\onecolumngrid

\begin{appendix}
\section{A Useful Lemma Regarding Optimal Probe States}\label{sec:a_useful_lemma}
In this Appendix, we prove a useful lemma restricting the structure of the probe state for an optimal protocol. 

\begin{lemma}\label{lem:f11}
Any optimal protocol, independent of the choice of control, requires that $\langle \hat{\mathcal{H}}_{1}(t)\rangle = 0$, where $\mathcal{H}_{1}(t)$ is the time-evolved generator of the first parameter and the expectation value is taken with respect to the initial probe state. Further the probe state must be of the form
\begin{equation}
    \ket{\psi} = \frac{\ket{0}\ket{\varphi_{0}} + e^{i\phi}\ket{1}\ket{\varphi_{1}}}{\sqrt{2}},
\end{equation}
for all times $s \in [0, t]$, where $\phi, \ket{\varphi_{0}}, \ket{\varphi_{1}}$ are arbitrary states on the $d-1$ remaining sensor qubits plus, potentially, the arbitrary number of ancilla---they can be $s$-dependent.
\end{lemma}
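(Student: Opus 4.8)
The plan is to work directly from the expression for the quantum Fisher information matrix element $\mathcal{F}(\vec\theta)_{11}$ in Eq.~(\ref{eq:fim}) and the optimality condition $\mathcal{F}(\vec\theta)_{11} = t^2/\alpha_1^2$ from Eq.~(\ref{eq:condonfim}). First I would recall that $\hat g_1 = \hat\sigma^z_1/2$, so its spectrum is $\{\pm 1/2\}$, and that $\hat{\mathcal H}_1(t) = -\int_0^t ds\, \hat U^\dagger(s)\hat g_1 \hat U(s)$ is a time-average of operators each unitarily conjugate to $\hat g_1$; hence each $\hat U^\dagger(s)\hat g_1\hat U(s)$ has spectrum $\{\pm 1/2\}$ and operator norm $1/2$, giving $\norm{\hat{\mathcal H}_1(t)} \le t/2$. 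From Eq.~(\ref{eq:fim}), $\mathcal{F}(\vec\theta)_{11} = 4\big(\langle \hat{\mathcal H}_1(t)^2\rangle - \langle\hat{\mathcal H}_1(t)\rangle^2\big) = 4\,\mathrm{Var}(\hat{\mathcal H}_1(t))$, which is at most $4$ times the square of the seminorm (half the spectral range) of $\hat{\mathcal H}_1(t)$, i.e. at most $4\cdot(t/2)^2 = t^2$. Since optimality forces $\mathcal F(\vec\theta)_{11} = t^2/\alpha_1^2 \ge t^2$ (as $|\alpha_1| = \norm{\vec\alpha}_\infty$ and, by the normalization convention $\norm{\vec\alpha}_\infty \le 1$ implicit in the setup, in fact with the convention $\alpha_1$ can be taken to be the largest; more carefully one uses the bound from Eq.~(\ref{eq:boixo})), the variance must be \emph{exactly} maximal.

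The key step is then to characterize the equality case: the variance $4\,\mathrm{Var}(\hat A)$ of a Hermitian operator $\hat A$ with eigenvalues in $[\lambda_{\min},\lambda_{\max}]$ equals $(\lambda_{\max}-\lambda_{\min})^2$ if and only if the state is supported entirely on the $\lambda_{\max}$- and $\lambda_{\min}$-eigenspaces with equal weight $1/2$ on each. Applying this to $\hat A = \hat{\mathcal H}_1(s)$ at the relevant time (and noting the argument must hold at \emph{every} $s\in[0,t]$ because the protocol must remain optimal if stopped early — or, more cleanly, by running the same argument with $t$ replaced by any intermediate time, using that the state at time $s$ evolved further is what matters): the probe, when propagated to time $s$, must be an equal superposition of a $+$-eigenstate and a $-$-eigenstate of $\int_0^s \hat U^\dagger(u)\hat\sigma^z_1\hat U(u)\,du$. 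Untangling the Heisenberg/Schrödinger pictures, this says the Schrödinger-picture state at time $s$ is an equal-weight superposition of the $\ket 0$ and $\ket 1$ eigenspaces of $\hat\sigma^z_1$ (since at $s$, the instantaneous generator is $\hat\sigma^z_1/2$ itself and the accumulated operator must have the state maximally spread along it; I would need to argue the extremal eigenspaces of the \emph{integrated} generator coincide with those of $\hat\sigma^z_1$, which follows because at the level of the first qubit all the conjugating unitaries act trivially in the relevant sense — this is the point needing the most care). That gives precisely the claimed form $\ket\psi = (\ket 0\ket{\varphi_0} + e^{i\phi}\ket 1\ket{\varphi_1})/\sqrt2$, and equal weights force $\langle\hat\sigma^z_1\rangle = 0$ hence $\langle\hat{\mathcal H}_1(t)\rangle$ proportional to... — in fact the cleanest route to $\langle\hat{\mathcal H}_1(t)\rangle = 0$ is: equal weight on the two extremal eigenvalues $\pm(s/2)\cdot(\text{something})$ symmetric about zero forces the mean to be zero, OR directly, once the state has the cat form with equal weights, $\langle\hat\sigma^z_1(u)\rangle$ type terms integrate to something one shows vanishes.

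The main obstacle I anticipate is the bookkeeping between the Heisenberg-picture generator $\hat{\mathcal H}_1(s)$ (an integral of conjugated $\hat\sigma^z_1/2$ operators) and the Schrödinger-picture probe state, specifically showing that the extremal eigenspaces of the time-integrated generator are exactly the $\hat\sigma^z_1$ eigenspaces $\{\ket 0\ket\cdot\}$ and $\{\ket 1\ket\cdot\}$ — a priori the integral of non-commuting bounded operators could have a smaller spectral range and different eigenvectors. I would handle this by arguing at fixed $s$ via differentiation: requiring maximal variance for \emph{all} $s\in[0,t]$ (equivalently, optimality of every truncation) and differentiating the equality condition in $s$ pins down the instantaneous structure, reducing the statement to the instantaneous generator $\hat\sigma^z_1/2$, whose extremal eigenspaces are manifestly the two computational-basis branches of the first qubit. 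The equal-weight condition then immediately yields $\langle\hat{\mathcal H}_1(t)\rangle$... hmm — actually I expect the author instead shows $\langle\hat{\mathcal H}_1(t)\rangle = 0$ first (it is forced by the off-diagonal/diagonal structure or simply by noting that shifting $\hat{\mathcal H}_1$ by a constant, i.e. the freedom in the seminorm, means WLOG the mean is the midpoint of the spectrum, which for the cat state is $0$), and then derives the state form; either ordering works.
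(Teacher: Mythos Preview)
Your overall strategy---bound $\mathcal{F}(\vec\theta)_{11} = 4\,\mathrm{Var}(\hat{\mathcal H}_1(t))$ by the square of the seminorm and characterize the equality case---matches the paper's. The execution differs precisely at the point you flag as the main obstacle. Rather than bounding the integrated operator $\hat{\mathcal H}_1(t)$ directly and then trying to connect its extremal eigenspaces back to those of $\hat\sigma^z_1$, the paper expands the variance as a double integral $\int_0^t\!\int_0^t \mathrm{Cov}[\hat g_1(s),\hat g_1(s')]\,ds\,ds'$, applies Cauchy--Schwarz to each covariance, and then bounds each instantaneous $\mathrm{Var}[\hat g_1(s)]$ by $\|\hat g_1\|_s^2/4$. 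Equality in the latter step forces the initial state to be an equal superposition of extremal eigenstates of $\hat g_1(s)$ \emph{for every} $s$; equivalently, $\ket{\psi(s)}$ has the cat form on qubit~1 at every time. The instantaneous structure falls out with no differentiation or truncation argument needed, and $\langle\hat{\mathcal H}_1(t)\rangle=0$ then follows because $\langle\psi(s)|\hat g_1|\psi(s)\rangle=0$ for all $s$.

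Your route can also be completed, but not by the methods you propose. The clean argument is: if $-\int_0^t \hat g_1(s)\,ds$ attains the extremal eigenvalue $t/2$ (which it must, for the variance to reach $t^2/4$) with eigenvector $\ket{\psi_+}$, then $\int_0^t\langle\psi_+|\hat g_1(s)|\psi_+\rangle\,ds = -t/2$ with each integrand $\ge -1/2$, so $\langle\psi_+|\hat g_1(s)|\psi_+\rangle = -1/2$ for a.e.\ $s$; positivity of $\hat g_1(s)+\tfrac12$ then gives $\hat g_1(s)\ket{\psi_+} = -\tfrac12\ket{\psi_+}$ for all $s$. Hence $\hat U(s)\ket{\psi_\pm}$ lies in the $\ket 1$- (resp.\ $\ket 0$-) eigenspace of $\hat\sigma^z_1$ at every time, giving the cat form. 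This is more direct than differentiating in $s$ or invoking optimality of truncated protocols.

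One minor correction: the optimality condition is $\mathcal F(\vec\theta)_{11} = t^2$ (Eq.~(\ref{eq:saturabilitycond}) with $j=1$), not $t^2/\alpha_1^2$; you conflated $\mathcal F(\vec q)_{11}$ with $\mathcal F(\vec\theta)_{11}$, and there is no normalization convention $\|\vec\alpha\|_\infty\le 1$ in the paper.
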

\begin{proof}
Consider the expression for the matrix elements of the quantum Fisher information matrix at time $t$ (Eq.~(4) of the main text):
\begin{equation}\label{eq:Fij}
     \mathcal{F}(\vec\theta)_{ij}=4[\frac{1}{2}\langle\{\hat{\mathcal{H}}_i(t),\hat{\mathcal{H}}_j(t)\}\rangle-\langle\hat{\mathcal{H}}_i(t)\rangle\langle\hat{\mathcal{H}}_j(t)\rangle],
\end{equation}
where the expectation values are taken with respect to the initial probe state $\ket{\psi(0)}$. Using the integral form of $\hat{\mathcal{H}}_j(t)$ (Eq.~(5) of the main text), we can write
\begin{align}
    \mathcal{F}(\vec\theta)_{11}&=4\mathrm{Var}\left[\hat{\mathcal{H}}_1(t)\right] \\
    &= 4\left[\int_0^t ds \int_0^t ds' \bra{\psi(0)}\hat{U}^\dagger(s)\hat g_1 \hat{U}(s)\hat{U}^\dagger(s')\hat g_1 \hat{U}(s')\ket{\psi(0)}\right]-4\left[\int_0^t ds \bra{\psi(0)}\hat{U}^\dagger(s)\hat g_1 \hat{U}(s)\ket{\psi(0)}\right]^2 \\
    &= 4 \int_0^t ds \int_0^t ds' \mathrm{Cov}_{\ket{\psi(0)}}[\hat{g}_1(s),\hat{g}_1(s')],
\end{align}
where we recall
\begin{equation}
    \hat{g}_1(s):= \hat{U}^\dagger(s) \hat{g}_1 \hat{U}(s),
\end{equation}
and $\hat{g}_{1} = \partial \hat{H}/\partial\theta_{1}$ is the initial generator with respect to the first parameter. Once again, the covariance is with respect to the initial probe state $\ket{\psi(0)}$. We can then upper bound this as
\begin{align}
    \mathcal{F}(\vec\theta)_{11}(t)&\leq 4\int_0^t ds \int_0^t ds'  \sqrt{\mathrm{Var}_{\ket{\psi(0)}}[\hat{g}_1(s)]\mathrm{Var}_{\ket{\psi(0)}}[\hat{g}_1(s')]} \label{eq:begin_ineq} \\
    &= 4\left[\int_0^t ds \sqrt{\mathrm{Var}_{\ket{\psi(0)}}[\hat{g}_1(s)]}\right]^2  \\
    &\leq \left[\int_0^t ds \norm{\hat g_1}_s\right]^2 \label{eq:end_ineq}\\
    &= t^2 \norm{\hat g_1}_s^2 \\
    &= t^2,
\end{align}
where the first inequality bounds the covariance as the square root of the product of the variances, the second inequality bounds the standard deviation of an operator by half the seminorm \cite{boixo2007generalized}, and the final equality uses the fact that $\hat{g}_{1} = \hat{\sigma}_{1}^{z}/2$ has seminorm 1~\footnote{Note that the above block of equations relies on the fact that we are using the fixed Hilbert space of qubit sensors. Were one to extend this derivation to photonic sensors with indefinite particle number, the results would not immediately follow.}.

Via Eq.~(\ref{eq:saturabilitycond}) of the main text (rigorously derived in Appendix~\ref{app:generalization}) we know that an optimal protocol must have $\mathcal{F}_{11}(\vec{\theta})(t) = t^{2}$. Therefore, an optimal protocol must saturate the inequalities in Eq.~(\ref{eq:begin_ineq}) and Eq.~(\ref{eq:end_ineq}). Eq.~(\ref{eq:end_ineq}) is saturated when  $\mathrm{Var}[\hat{g}_1(s)]=\norm{\hat g_1(s)}_s=\norm{\hat g_1}_s$ for all $s$. This holds if and only if $\ket{\psi(0)}=\frac{1}{\sqrt{2}}\left(\ket{\lambda_\mathrm{min}}+e^{i\phi}\ket{\lambda_\mathrm{max}}\right)$, where $\ket{\lambda_\mathrm{min}}$ and $\ket{\lambda_\mathrm{max}}$ are the eigenstates corresponding to the minimum and maximum eigenvalues of $\hat g_1(s)$ for all $s\in[0,t]$ and $\phi$ is an arbitrary phase. Given this condition, $\hat g_1(s)$ and $\hat g_1(s')$ act identically on the state $\ket{\psi(0)}$ and consequently are fully correlated when one considers the covariance of these operators with respect to the state. The Cauchy-Schwarz inequality in Eq.~(\ref{eq:begin_ineq}) is immediately saturated as well.

Importantly, under this condition on the probe state, any operator in the one-parameter family $\hat g_1(s)=\hat{U}^\dagger(s)\hat g_1 \hat{U}(s)$ acts identically on $\ket{\psi(0)}$ (the unitary does not change the eigenvalues, and the eigenstates are shared by all $\hat{g}_{1}(s)$, as argued above). Thus, one can freely substitute any operator in the one-parameter family $\hat g_1(s)=\hat{U}^\dagger(s)\hat g_1 \hat{U}(s)$ for another. Therefore, for such an optimal probe state,
\begin{equation}
    \langle\mathcal{H}_1(t)\rangle=-\int_0^t ds \bra{\psi(0)}\hat g_1(s)\ket{\psi(0)}=t\langle \hat g_1\rangle = 0
\end{equation}
because $\hat{g}_{1} \propto \hat{\sigma}_{1}^{z}$ and, consequently, by the argument that we can replace $\hat{g}_{1}$ by $\hat{g}_{1}(s)$ when acting on the probe state,
\begin{equation}\label{eq:<g_1>=0}
     \bra{\psi(s)}\hat g_1\ket{\psi(s)}=0\quad (\forall s).
\end{equation}
The statement of the lemma immediately follows.
\end{proof}
Note that Lemma~\ref{lem:f11} holds for any optimal protocol, not just those using our cat-like states. However, it also justifies our choice of probe states and why we specifically set $\tau_{1} = 1$ for all $\vec\tau$ (i.e., to maintain an equal superposition between $\ket{0}$ and $\ket{1}$ on the first qubit).

\section{Proof of the Optimality of Cat-State Protocols }\label{sec:proof_of_13}
In this Appendix, we will rigorously prove the optimality of the time-dependent protocols considered in the main text. In particular, we show that the Fisher information matrix condition for saturability in Eq.~(\ref{eq:saturabilitycond}) of the main text is satisfied by solutions to Eq.~(\ref{eq:prob}) of the main text when we consider protocols that use $\hat\sigma^x$ and CNOT controls to switch between families of cat-like states in $\mathcal{T}$. That is, we show the following mapping between saturability conditions:
\begin{equation}\label{eq:mapconditions}
     T\vec p = \frac{\vec\alpha}{\alpha_1} \quad \implies \quad \mathcal{F}(\vec\theta)_{1j}=\frac{\vec\alpha}{\alpha_1}t^2, 
\end{equation}
where we recall that we have assumed that $|\alpha_1|=\norm{\vec\alpha}_\infty > |\alpha_j|$ for all $j$ (in Appendix~\ref{app:generalization}, we will generalize beyond the assumption of a single maximum magnitude $\alpha_j$ at the cost of some notational inconvenience). 

Using Lemma \ref{lem:f11}, we can show that for \textit{any} optimal protocol (i.e., not just those using our cat-like states)
\begin{align}
    \mathcal{F}(\vec\theta)_{1j}&=2\langle \{\hat{\mathcal{H}}_1,\hat{\mathcal{H}}_j\}\rangle\\
    &=2\int_0^t ds \int_0^t ds' \bra{\psi(0)}\{\hat g_1(s),\hat{U}^\dagger(s')\hat g_j\hat{U}(s')\}\ket{\psi(0)}\\
    &=2 \int_0^t ds \int_0^t ds' \bra{\psi(0)}\{\hat g_1,\hat{U}^\dagger(s')\hat g_j\hat{U}(s')\}\ket{\psi(0)}\\
    &=2t \int_0^t ds' \bra{\psi(0)}\{\hat g_1,\hat{U}^\dagger(s')\hat g_j\hat{U}(s')\}\ket{\psi(0)}\\
    &=2t \int_0^t ds' \bra{\psi(0)}\{\hat g_1(s'),\hat{U}^\dagger(s')\hat g_j\hat{U}(s')\}\ket{\psi(0)}\\
    &=4t \int_0^t ds' \bra{\psi(s')}\hat g_1\hat g_j\ket{\psi(s')} \label{eq:psis}\\
    &=t \int_0^t ds' \bra{\psi(s')}\hat \sigma^z_1\hat \sigma^z_j\ket{\psi(s')}.\label{eq:f1j_final}
\end{align}
The third and fifth equalities come from the argument in the proof of Lemma~\ref{lem:f11} that we may replace $\hat{g}_{1}(s)$ with $\hat{g}_{1}$ (and vice versa) when acting on optimal probe states. The penultimate equality is just a consequence of the commutativity of the initial generators. 

We now apply these general results to our specific protocols. Saturating the initial Fisher information conditions in Eq.~(\ref{eq:mapconditions}) implies that we must show
\begin{equation}\label{eq:s22}
    \int_0^t ds' \bra{\psi(s')}\hat \sigma^z_1\hat \sigma^z_j\ket{\psi(s')} = \frac{\alpha_{j}}{\alpha_{1}}t.
\end{equation}
Let the gates in our protocols be labeled as $\hat G_i$ where $\hat G_i$ is either a CNOT or $\hat\sigma^x$ gate. The gate $\hat G_i$ is applied at a time $s=t_i^*$. Then, for $s \in (t_{k}^{*}, t_{k+1}^{*})$, we can write the time-dependent state as
\begin{equation}
    \ket{\psi(s)} = \ket{\psi(\vec{\tau}^{(k)}; \varphi)} \equiv \prod_{i=0}^{k}\hat{G}_{i}\ket{\psi(\vec{\tau}^{(0)};\varphi)},
\end{equation}
where $\ket{\psi(\vec{\tau}^{(0)};0)}$ is the initial state of the protocol, $\varphi$ is the relative phase between the two branches of the state that has accumulated up to time $s$, and, therefore, $\ket{\psi(\vec{\tau}^{(k)};\varphi)}$ is the state produced after applying the first $k$ gates. Because our protocols explicitly use only $\hat{\sigma}^{x}$ and $\mathrm{CNOT}$ gates to move between families in $\mathcal{T}$, we have that $\ket{\psi(\vec{\tau}^{(k)};\varphi)}=(\ket{0}|\chi_{0}^{(k)}\rangle + e^{i\varphi} \ket{1}|\chi_{1}^{(k)}\rangle)/\sqrt{2}$, and
\begin{equation}
       \int_0^t ds' \bra{\psi(s')}\hat \sigma^z_1\hat \sigma^z_j\ket{\psi(s')} = \sum_{i=0}^{n}(t_{i+1}^{*}-t_{i}^{*})\tau_{j}^{(i)},
\end{equation}
where we implicitly define $t_{0}^{*} = 0$ and $t_{n+1}^{*}=t$ as the initial and final times of the protocol and $|\chi_{0}^{(k)}\rangle$ and $|\chi_{1}^{(k)}\rangle$ are some states defined on the Hilbert space which excludes the first qubit sensor.  The time $t_{i+1}^{*}-t_{i}^{*}$ corresponds to the time we are in the probe family $\ket{\psi(\vec{\tau}^{(i)};\varphi)}$, which in our protocols is $p_{i}t$. Thus, to satisfy the Fisher information conditions, we need
\begin{equation}
    \sum_{i} p_{i}\tau_{j}^{(i)} = \frac{\alpha_{j}}{\alpha_{1}} \implies (T\vec{p})_{j} = \frac{\alpha_{j}}{\alpha_{1}}.
\end{equation}
This formally proves optimality of our time-dependent protocols that satisfy $T\vec{p} = \vec{\alpha}/\alpha_{1}$.

\section{Review of Robust Phase Estimation}\label{sec:robust_phase_estimation}
In this Appendix, we review, for completeness, the phase estimation protocols of Refs.~\cite{kimmel2015robust, kimmel2015robusterratum, belliardo2020achieving} described in the main text as a method to extract the quantity of interest $q$ from the state
\begin{equation}\label{eq:finstate}
    1/\sqrt{2}(\ket{0}+e^{iqt/\alpha_{1}}\ket{1})(\ket{0\dots0}),
\end{equation}
which is the final state obtained from our family of optimal protocols. 

Again, when we refer to our protocols as optimal, we mean this in the sense that our protocols achieve the conditions on the quantum Fisher information matrix that allow the maximum possible quantum Fisher information with respect to the parameter $q$ to be obtained. However, to completely specify the procedure by which one obtains the quantity $q$, an explicit phase estimation protocol is needed. As explained in the main text, such a task is complicated by the fact that for large times and/or small $\alpha_1=\norm{\vec\alpha}_\infty$, it is unclear what $2\pi$ interval the relative phase between the branches of Eq.~(\ref{eq:finstate}) is in~\cite{higgins2009demonstrating, hayashi2018resolving}. The phase estimation protocols of Refs.~\cite{kimmel2015robust, kimmel2015robusterratum, belliardo2020achieving} demonstrate how to optimize resources to deal with this issue, while still saturating the single-shot bound in Eq.~(2) of the main text up to a small $d$- and $t$-independent constant. In particular, such protocols allow us to reach a mean square error of
\begin{equation}\label{eq:constoverhead}
    \mathcal{M} = \frac{c^2\norm{\vec\alpha}_\infty^2}{t^2},
\end{equation}
for some small (explicitly known) constant $c$. Ref.~\cite{gorecki2020pi} proves that this constant factor $c^{2}$ in Eq.~(2) can be reduced to, at best, $\pi^2$.

While reviewing such phase estimation protocols, we follow the presentation of Ref.~\cite{belliardo2020achieving}, which corrects a few minor errors in Ref.~\cite{kimmel2015robust}, as noted in the corresponding erratum~\cite{kimmel2015robusterratum}. We refer the reader to Ref.~\cite{belliardo2020achieving} for further details. Conveniently, by putting the final state into the form of Eq.~(\ref{eq:finstate}), we have reduced this problem completely to the single qubit, multipass version of the  problem described in that reference. Consequently, everything follows practically identically to their presentation.

Consider dividing the total time $t$, which is the relevant resource in our problem, into $K$ stages where we evolve for a time $M_j\delta t$ in the $j$-th stage ($\delta t$ is some small basic unit of time and $M_j\in\mathbb{N}$). We assume that we have ($d,t$)-independent, prior knowledge of $q$ such that we can set $\delta t$ to satisfy
\begin{equation}
   \frac{\delta t q}{\norm{\vec\alpha}_\infty}\in[0,2\pi).
\end{equation}
In the $j$-th stage, using one of our protocols for a time $M_j\delta t$, we prepare $2\nu_j$ independent copies of the state
\begin{equation}
    \ket{\psi_j}=\frac{1}{\sqrt{2}}\left(\ket{0}+e^{iqM_j\delta t/\norm{\vec\alpha}_\infty}\ket{1}\right)\ket{0...0},
\end{equation}
 From now on we will drop the $d-1$ qubit sensors in the state $\ket{0...0}$, as they are irrelevant; however, it is worth noting that it is not necessary to put the state in this form before performing measurements. We do so to make the comparison to Ref.~\cite{belliardo2020achieving} particularly transparent. We then perform a single-qubit measurement on the first qubit sensor of each of these state copies, yielding $2\nu_j$ measurement outcomes, which we can use to estimate $q$. The total time of this $K$ stage protocol is consequently given by
\begin{equation}
    t=2\sum_{j=1}^K\nu_jM_j \delta t.
\end{equation}

Given this setup, we choose single-qubit measurements and optimize the choice of $\nu_j, M_j$ per stage so that we can learn $q$ bit by bit, stage by stage, in such a way that optimal scaling in $d$, $t$ is still obtained [Eq.~(\ref{eq:constoverhead})]. In particular, consider making two measurements, each $\nu_j$ times per stage (thus explaining the factor of two we introduced earlier): (i) a $\hat\sigma^{x}$ measurement and (ii) a $\hat\sigma^{y}$ measurement. These measurements each give us outcomes that are Bernoulli variables (i.e. with values $\in\{0,1\}$) with outcome probabilities
\begin{align}
    p^{(x)}(0)&=\frac{1+\cos \left(M_j q \delta t/\norm{\vec\alpha}_\infty\right)}{2}, \nonumber \\
    p^{(x)}(1)&=1-p^{(x)}(0), \nonumber \\
    p^{(y)}(0)&=\frac{1+\sin \left(M_j q \delta t/\norm{\vec\alpha}_\infty\right)}{2}, \nonumber \\
    p^{(y)}(1)&=1-p^{(y)}(0),
\end{align}
where the first two probabilities are for the $\hat\sigma^{x}$ measurement and the latter two are for the $\hat\sigma^{y}$ measurement. Using both of these measurements allows us to resolve the two-fold degeneracy in the phase $q M_j \delta t/\norm{\vec\alpha}_\infty$ within a given $[0,2\pi)$ interval that would arise from, e.g., a $\hat\sigma^{x}$ measurement alone. The observed probabilities of obtaining $0$ for the $\hat\sigma^{x}$ and $\hat\sigma^{y}$ are independent random variables that converge in probability to their associated expectation values for $\nu_j\rightarrow\infty$. These measurements are non-adapative, which makes this particular phase estimation protocol especially appealing. 

At each stage, we extract an estimator $\tilde\phi$ of $\phi:=M_j q\delta t/\norm{\vec\alpha}_\infty$ as
\begin{equation}
    \tilde\phi:=\mathrm{atan2}(2f_0^{(y)}-1, 2f_0^{(x)}-1)\in[0,2\pi),
\end{equation}
where $\mathrm{atan2}$ is the 2-argument arctangent with range $[0, 2\pi)$. In the limit $\nu_j\rightarrow\infty$, this estimator indeed converges to $\phi$, but the ``magic'' of this phase estimation scheme lies in the correct reprocessing of data stage-by-stage so that $\nu_j$ can be kept $(d,t)$-independent. Ref.~\cite{belliardo2020achieving} demonstrates rigorously that picking $M_j=2^{j-1}$ for $j\in\{1,\cdots, K\}$ and optimizing over $\nu_j$ one can, at each stage, estimate $q/\norm{\vec\alpha}_\infty$ with a confidence interval of size $2\pi/(3\times 2^{j-1})$ so that in each stage we learn another bit of this quantity. The results of this optimization are $\nu_j$ that decrease linearly with the step $j$ so that as the time spent in a stage grows, the statistics we employ shrink. Importantly, it so happens that we can scale $K\rightarrow\infty$ (i.e. take an asymptotic in $t$ limit) while maintaining $\nu_K$ constant. The net result is a mean square error given by Eq.~(\ref{eq:constoverhead}) with $c=24.26\pi$, which is a factor of $24.26$ greater than the theoretical optimal value~\cite{gorecki2020pi}, but with the convenient feature that the protocol uses non-adaptive measurements. We refer the interested reader to Ref.~\cite{belliardo2020achieving} for detailed derivation of the results sketched here.

It is also worth noting that other protocols are possible.  For instance, in Ref.~\cite{suzuki2020quantum}, a similar two-step method is described for the estimation of global parameters (i.e.~where the parameter is not restricted to a local neighborhood of parameter space). This protocol provides an explicit method to use some (ultimately negligible) fraction of the sensing time available to narrow down the location of the parameter $q$ in parameter space, followed by an optimal local estimation. We emphasize that the explicit estimation scheme we propose (i.e.~the one in Refs.~\cite{kimmel2015robust,kimmel2015robusterratum,belliardo2020achieving}) does not require adaptive measurements, which is one of its key advantages.

\section{Full Proof of the Main Theorem}\label{sec:part2full}
In this Appendix, we expand on the proof sketch of Theorem 1 in the main text to fully prove the result. For reference, this theorem is restated here. 

\setcounter{theorem}{0}

\begin{theorem}\label{thm:entanglement_supp}
Let $q(\vec\theta)=\vec\alpha\cdot\vec\theta$. Without loss of generality, let $\norm{\vec\alpha}_\infty=|\alpha_1|$. Let $k\in\mathbb{Z}^+$ so that
  \begin{equation}\label{eq:condfull_supp}
      k-1<\frac{\norm{\vec\alpha}_1}{\norm{\vec\alpha}_\infty} \leq k.
 \end{equation}
An optimal protocol to estimate $q(\vec\theta)$, where the parameters $\vec\theta$ are encoded into the probe state via unitary evolution under the Hamiltonian in Eq.~(1) of the main text, requires at least, but no more than, $k$-partite entanglement. 
\end{theorem}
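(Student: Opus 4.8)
\emph{Proof plan.} I will establish the two inequalities in Eq.~(\ref{eq:condfull_supp}) separately, following the sketch. Write $\vec\alpha'=\vec\alpha/\alpha_1$, so $\alpha'_1=1$, $|\alpha'_j|\le 1$, and $\sum_{j\ge 2}|\alpha'_j|=\norm{\vec\alpha}_1/\norm{\vec\alpha}_\infty-1$. Since the construction in Appendix~\ref{sec:proof_of_13} turns \emph{every} nonnegative solution of $T\vec p=\vec\alpha'$ into an optimal protocol, and since the columns of $T^{(k)}$ are precisely the $\vec\tau\in\{0,\pm1\}^d$ with $\tau_1=1$ and at most $k$ nonzero entries (equivalently, cat states that are at most $k$-partite entangled), the ``no more than $k$'' half reduces to showing System~$A(k)$ is feasible whenever $\norm{\vec\alpha}_1/\norm{\vec\alpha}_\infty\le k$, and the ``at least $k$'' half reduces to showing that no protocol whose probe state is at all times at most $(k-1)$-partite entangled can satisfy Eq.~(\ref{eq:saturabilitycond}) when $\norm{\vec\alpha}_1/\norm{\vec\alpha}_\infty>k-1$.

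\emph{Upper bound.} By the Farkas--Minkowski lemma, $A(k)$ is feasible iff $B(k)$ is infeasible, so I assume $\vec y$ satisfies $(T^{(k)})^\top\vec y\ge 0$ and $\langle\vec\alpha',\vec y\rangle<0$ and derive a contradiction. Among the rows of $(T^{(k)})^\top$ are, for every $S\subseteq\{2,\dots,d\}$ with $|S|\le k-1$ and every sign pattern on $S$, the inequality $y_1+\sum_{j\in S}\epsilon_j y_j\ge 0$; choosing $\epsilon_j=-\sgn(y_j)$ gives $y_1\ge\sum_{j\in S}|y_j|$, and taking $S$ to index the $k-1$ largest of $\{|y_j|\}_{j\ge2}$ shows $y_1$ dominates that partial sum (and $y_1\ge 0$). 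On the other hand $\langle\vec\alpha',\vec y\rangle\ge y_1-\sum_{j\ge2}|\alpha'_j|\,|y_j|$, and because the weights $|\alpha'_j|$ lie in $[0,1]$ and sum to at most $k-1$, the functional $\sum_{j\ge2}|\alpha'_j|\,|y_j|$ is maximized by placing unit weight on the $k-1$ largest $|y_j|$, hence it is bounded by $y_1$. Thus $\langle\vec\alpha',\vec y\rangle\ge 0$, a contradiction; $A(k)$ is feasible, and solving $T^{(k)}\vec p=\vec\alpha'$ produces an explicit optimal protocol using at most $k$-partite entanglement.

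\emph{Lower bound.} Suppose, for contradiction, an optimal protocol exists whose probe state $\ket{\psi(s)}$ is at most $(k-1)$-partite entangled for all $s\in[0,t]$. By Lemma~\ref{lem:f11}, $\ket{\psi(s)}=(\ket0\ket{\varphi_0}+e^{i\phi}\ket1\ket{\varphi_1})/\sqrt2$ with $\langle\hat\sigma^z_1\rangle_{\psi(s)}=0$, and by Eq.~(\ref{eq:f1j_final}), $\mathcal{F}(\vec\theta)_{1j}=t\int_0^t ds\,\langle\hat\sigma^z_1\hat\sigma^z_j\rangle_{\psi(s)}$. In a $(k-1)$-producible state, qubit $1$ sits in an entangled block containing at most $k-1$ parties in all (hence at most $k-2$ sensor qubits besides qubit $1$); for any $j$ outside that block the reduced two-qubit state factorizes, so $\langle\hat\sigma^z_1\hat\sigma^z_j\rangle=\langle\hat\sigma^z_1\rangle\langle\hat\sigma^z_j\rangle=0$. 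Therefore at each time $\sum_{j=1}^d|\langle\hat\sigma^z_1\hat\sigma^z_j\rangle_{\psi(s)}|\le k-1$ (the $j=1$ term is exactly $1$ and at most $k-2$ others are nonzero, each of modulus $\le 1$). Combining with Eq.~(\ref{eq:saturabilitycond}) and the triangle inequality,
\[
\frac{\norm{\vec\alpha}_1}{\norm{\vec\alpha}_\infty}\,t^2=\sum_{j=1}^d|\mathcal{F}(\vec\theta)_{1j}|\le t\int_0^t ds\sum_{j=1}^d|\langle\hat\sigma^z_1\hat\sigma^z_j\rangle_{\psi(s)}|\le (k-1)\,t^2,
\]
contradicting $\norm{\vec\alpha}_1/\norm{\vec\alpha}_\infty>k-1$. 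Since nothing here uses the specific controls, this rules out \emph{any} protocol generated by a Hamiltonian of the form Eq.~(\ref{eq:H}); the same pointwise bound, integrated, also identifies $\norm{\vec\alpha}_1/\norm{\vec\alpha}_\infty$ as the minimum achievable time-averaged entanglement (cf.\ Lemma~\ref{lemma:avg_entanglement}). Relaxing $|\alpha_1|>|\alpha_j|$ to permit ties changes none of the inequalities above; only the enumeration of cat-state families and a few strict-versus-weak comparisons need cosmetic adjustment.

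\emph{Main obstacle.} I expect the delicate step to be the lower bound: pinning down ``$k$-partite entanglement'' as $k$-producibility on the full $(d+n_a)$-qubit space, justifying rigorously that $\langle\hat\sigma^z_1\hat\sigma^z_j\rangle$ behaves as a connected correlator that vanishes unless $j$ shares qubit $1$'s entangled block (this is exactly where $\langle\hat\sigma^z_1\rangle=0$ from Lemma~\ref{lem:f11} is needed), and verifying the counting bound $\sum_j|\langle\hat\sigma^z_1\hat\sigma^z_j\rangle|\le k-1$ holds uniformly in $s$ even as the block structure varies in time. The only subtlety in the upper bound is the elementary fractional-knapsack estimate $\sum_{j\ge2}|\alpha'_j|\,|y_j|\le(\text{sum of the }k-1\text{ largest }|y_j|)$, which should be isolated as its own step.
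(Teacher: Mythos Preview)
Your proposal is correct and follows essentially the same approach as the paper: the upper bound via Farkas--Minkowski with the contradiction coming from the row $\vec\tau=(1,-\sgn y_2,\dots,-\sgn y_k,0,\dots)$ of $(T^{(k)})^\top$, and the lower bound via Lemma~\ref{lem:f11} together with Eq.~(\ref{eq:f1j_final}) and the pointwise counting bound $\sum_j|\langle\hat\sigma^z_1\hat\sigma^z_j\rangle|\le k-1$. Your presentation is in places slightly more explicit than the paper's (e.g.\ spelling out the product-state factorization that makes the correlator vanish, and naming the fractional-knapsack step), but the logical structure is the same.
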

\begin{proof}
We divide our proof into two parts. First, using $k$-partite entangled states from the set of cat-like states considered in the main text, we show the existence of an optimal protocol, subject to the upper bound of Eq.~(\ref{eq:condfull_supp}). Second, we show that there exists no optimal protocol using at most $(k-1)$-partite entanglement, proving the lower bound of Eq.~(\ref{eq:condfull_supp}). 

\noindent\emph{Part 1.} Define $T^{(k)}$ to be the submatrix of $T$ with all columns $n$ such that $\sum_m |T_{mn}|>k$ are eliminated, which enforces that any protocol derived from $T^{(k)}$ uses only states that are at most $k$-partite entangled. Define System $A(k)$ as
 \begin{align}
     T^{(k)}\vec p^{(k)} &=\vec\alpha/\alpha_1, \label{eq:thm1a_supp}\\
     \vec{p}^{(k)}&\geq0 \label{eq:thm1b_supp}.
 \end{align}
 
Let $\vec\alpha'= \vec\alpha/\alpha_1$ and define System $B(k)$ as
\begin{align}
    (T^{(k)})^\top\vec y\geq 0 \label{eq:cond1_supp},\\
    \langle\vec\alpha', \vec y\rangle < 0 \label{eq:cond2_supp}.
\end{align}
By the Farkas-Minkowski lemma~\cite{farkas1902, dinh2014farkas}, System $A(k)$ has a solution if and only if System $B(k)$ does not. In particular, this lemma, which, geometrically, is an application of the hyperplane separation theorem \cite{boyd2004convex} is as follows:
\begin{lemma}[Farkas-Minkowski]\label{lemma:farkas}
Consider the system
\begin{align}
    A \vec{x} &= \vec{b},\label{eqn:farkas_system_1} \\
    \vec{x} &\geq 0, \label{eqn:farkas_system_2}
\end{align}
with $A \in \mathbb{R}^{m\times n}$, $\vec{x} \in \mathbb{R}^{n}$, and $\vec{b}\in\mathbb{R}^{m}$. The above system has a solution if and only if there is no solution $\vec{y}$ to
\begin{align}
    A^{\top}\vec{y} &\geq 0, \\
    \langle \vec{b}, \vec{y} \rangle &< 0.
\end{align}
\end{lemma}

Therefore, to prove the result it is sufficient to show that System $B(k)$ does not have a solution if $\sum_{j>1}|\alpha_{j}'| \leq k-1$, where we used that $\alpha'_1=1$.
We assume that a solution $\vec{y}$ exists and will arrive at a contradiction.  Without loss of generality, we assume that $|y_j| \geq |y_{j+1}|$ for all $1 < j < d$. Eq.~(\ref{eq:cond2_supp}) implies $ \sum_{j>1} \alpha_j' y_j < -y_1$. $(T^{(k)})^\top$ has a row $n^*$ given by $\vec\tau^{(n^*)}=(1, 0, \dots, 0)$, so by Eq.~(\ref{eq:cond1_supp}) any solution $\vec{y}$ to System $B$ has $y_1\geq 0$. Therefore, $\left| \sum_{j>1} \alpha_j' y_j\right| > y_1$, which, by the triangle inequality, implies
\begin{equation}\label{eq:cond2update_supp}
    \sum_{j>1}|\alpha_j'||y_j|>y_1.
\end{equation}
Because $|\alpha_{j}'| \leq 1$ for all $j$, because $\sum_{j>1}|\alpha_{j}'| \leq k-1$, and because $|y_j|$ for $j > 1$ are ordered in descending order, the largest the left-hand-side of Eq.~(\ref{eq:cond2update_supp}) can be is $\sum_{j = 2}^{k} |y_j|$, leading to 
\begin{equation}
\sum_{j = 2}^{k} |y_j|>y_1.
\end{equation}
This directly contradicts Eq.~(\ref{eq:cond1_supp}) for the row of $T^{(k)}$ given by $\vec{\tau} = (1,-\sgn(y_2), \dots, - \sgn(y_{k}), 0, 0, \dots)$.

\noindent\emph{Part 2.}
Using Eq.~(\ref{eq:f1j_final}), we have that, for any optimal protocol,
\begin{align}
    \mathcal{F}(\vec\theta)_{1j}=t \int_0^t ds' \bra{\psi(s')}\hat \sigma^z_1\hat \sigma^z_j\ket{\psi(s')},
\end{align}
where we recall that $\ket{\psi(s)} = U(s)\ket{\psi(0)}$. Because $\bra{\psi(s')}\hat\sigma_{1}^{z}\ket{\psi(s')}=0$ for all $s'$ (see Eq.~(\ref{eq:<g_1>=0})), the integrand is non-zero if and only if $\ket{\psi(s')}$ is such that the first qubit is entangled with the $j$th. Define the indicator variable
\begin{equation}\label{eq:indicators}
    E_{j}(s') = \begin{cases}
    1 & \text{$\ket{\psi(s)}$ entangles qubit $j$ and 1} \\
    0 & \text{else}
    \end{cases},
\end{equation}
for all $j$, including any possible ancilla qubits. Here, we define $E_{1} = 1$ even though the first qubit is not ``entangled'' with itself. Further define
\begin{equation}\label{eq:totalentanglement}
    E(s') = \sum_{j} E_{j}(s') \leq (k-1),
\end{equation}
where $E(s')$ is the total number of sensor qubits entangled with the first qubit at time $s'$ and the upper bound comes from our assumption on the partiteness of our probe states.
We then have that
\begin{align}
    \mathcal{F}(\vec\theta)_{1j}&\leq t \int_0^t ds' E_j(s').
\end{align}

Furthermore, for any optimal protocol using at most $(k-1)$-partite entanglement, we require that
\begin{equation}
    \sum_{j}\left|\frac{\alpha_{j}}{\alpha_{1}}t^{2} \right| = \sum_{j}|\mathcal{F}(\vec{\theta})_{j1}| \leq t\sum_{j}\int_{0}^{t}ds' E_{j}(s') = t\int_{0}^{t}ds'\sum_{j}E_{j}(s) \leq t\int_{0}^{t}ds' (k-1) = (k-1)t^{2}.
\end{equation}
We now have a contradiction, however, as the theorem statement assumed that
\begin{equation}
     \sum_{j}\left|\frac{\alpha_{j}}{\alpha_{1}}t^{2}\right| = \frac{\norm{\vec{\alpha}}_{1}}{\norm{\vec{\alpha}}_{\infty}}t^{2} > (k-1)t^{2}.
\end{equation}
This concludes the proof that $(k-1)$-partite entanglement in any form (i.e., not just from cat-like probe states) is insufficient to generate an optimal protocol.
\end{proof}

We also observe that the lower bound on the size of the least entangled state used in an optimal protocol is really, at its core, a lower bound on the \emph{average} entanglement required to saturate the conditions on the quantum Fisher information matrix. Here, average entanglement refers to weighting the size of the entangled state by the proportion of time it is used in the protocol. This lower bound is simply $\norm{\vec{\alpha}}_{1}/\vec{\alpha}_{\infty}$. The lower bound on the size of the most-entangled state, or the bound on \emph{instantaneous} entanglement, comes from ensuring that this lower bound on average entanglement is achievable (that is, if the instantaneous entanglement is too small at each stage, then the average entanglement required cannot be reached). 

\section{Minimum Entanglement Non-Echoed Protocols}\label{sec:non-echoed}
In this Appendix, we prove that there exist protocols that minimize both instantaneous and average entanglement. We recall from Section~\ref{sec:avg_entanglement} the definition of the non-echoed protocols that minimize average entanglement. 

\begin{definition}[Non-Echoed Protocols]
Consider some $\vec{\alpha}\in\mathbb{R}^d$ encoding a linear function of interest. Let $T$ be the matrix which describes our families of cat-like probe states, and let $\vec{p}$ specify a valid protocol such that $\vec p>0$ and $T\vec{p} = \vec{\alpha}/\norm{\vec{\alpha}}_{\infty}$. We say that the protocol defined by $\vec{p}$ is ``non-echoed'' if $\forall i$ such that $p_{i}$ is strictly greater than 0, $\sgn(T_{ij}) \in \{0, \sgn(\alpha_{j})\}$. 
\end{definition}

We now prove Theorem~\ref{thm:avg_entanglement} from the main text, which we again repeat for simplicity. 

\begin{theorem}\label{thm:avg_entanglement_app}
    For any function encoding $\vec{\alpha}$, there exists a non-echoed optimal protocol with minimum instantaneous entanglement. 
\end{theorem}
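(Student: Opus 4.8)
The plan is to mirror the structure of the proof of Theorem~\ref{thm:entanglement_supp}, but to restrict the allowed set of cat-like families to those that accumulate phase only ``in the correct direction.'' Concretely, fix $k$ as in Eq.~(\ref{eq:condfull_supp}), and let $\widetilde T^{(k)}$ be the submatrix of $T$ obtained by keeping exactly those columns $n$ with (i) $\sum_m |T_{mn}| \leq k$ (the instantaneous-entanglement restriction, as in Part~1 of Theorem~\ref{thm:entanglement_supp}) and (ii) $\sgn(T_{mn}) \in \{0, \sgn(\alpha_m)\}$ for every $m$ (the non-echoed restriction from the Definition). By the argument of Lemma~\ref{lemma:avg_entanglement} any nonnegative solution $\vec p$ to $\widetilde T^{(k)}\vec p = \vec\alpha/\norm{\vec\alpha}_\infty$ is simultaneously a minimum-instantaneous-entanglement protocol (it uses only $\leq k$-partite states, which is optimal by Theorem~\ref{thm:entanglement_supp}) and a minimum-average-entanglement protocol (it is non-echoed, hence achieves average entanglement $\norm{\vec\alpha}_1/\norm{\vec\alpha}_\infty$). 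So the whole theorem reduces to showing that such a $\vec p$ exists.

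First I would reduce to a Farkas-type feasibility question: by Lemma~\ref{lemma:farkas}, the system $\widetilde T^{(k)}\vec p = \vec\alpha/\alpha_1$, $\vec p \geq 0$, has a solution if and only if there is no $\vec y$ with $(\widetilde T^{(k)})^\top \vec y \geq 0$ and $\langle \vec\alpha', \vec y\rangle < 0$, where $\vec\alpha' = \vec\alpha/\alpha_1$. As in Part~1, I would assume such a $\vec y$ exists and derive a contradiction. Since $\widetilde T^{(k)}$ still contains the column $\vec\tau = (1,0,\dots,0)^\top$ (it is non-echoed trivially and $1$-partite), we again get $y_1 \geq 0$, and Eq.~(\ref{eq:cond2_supp}) gives $\sum_{j>1}|\alpha_j'||y_j| > y_1$. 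The one new subtlety relative to Part~1 is the sign constraint: the ``bad'' row I want to invoke to contradict $(\widetilde T^{(k)})^\top \vec y \geq 0$ must itself be non-echoed, i.e. its nonzero entries in positions $j>1$ must have sign $\sgn(\alpha_j)$, not $-\sgn(y_j)$. Here I would use the freedom to assume without loss of generality $\alpha_j \geq 0$ for all $j$ (absorbing signs into the definition of $\theta_j$, exactly as the main text does elsewhere), so that the non-echoed constraint just says all nonzero entries are $+1$; then I would want to select the row $\vec\tau$ with $1$'s in the first coordinate and in the coordinates $j$ (among $2,\dots,d$) carrying the $k-1$ largest $|y_j|$ with $y_j \le 0$. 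The subtlety is that if $y_j > 0$ for some of the large-$|y_j|$ indices, I cannot include them; but then those terms $|\alpha_j'||y_j|$ with $y_j>0$ actually \emph{help} satisfy $\langle\vec\alpha',\vec y\rangle \geq 0$ rather than $<0$, so I can argue they may be dropped from the left-hand side of the strict inequality $\sum_{j>1}|\alpha_j'||y_j| > y_1$ at the cost of only strengthening the bound on the negative part — more precisely, $\sum_{j>1}\alpha_j' y_j < -y_1$ combined with $y_1\ge 0$ forces $\sum_{j: y_j<0}\alpha_j' |y_j| > y_1 + \sum_{j:y_j>0}\alpha_j' y_j \geq y_1$, and now all relevant indices have $y_j \le 0$, so the row with $1$'s on the top-$(k-1)$ of those indices is genuinely non-echoed and contradicts $(\widetilde T^{(k)})^\top\vec y\ge 0$ exactly as in Part~1 (using $\alpha_j'\le 1$ and $\sum_{j>1}\alpha_j' \le k-1$).

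I expect the main obstacle to be precisely this interaction between the sign constraint and the index-selection step: in Part~1 of Theorem~\ref{thm:entanglement_supp} one freely chooses the contradicting row to have entries $-\sgn(y_j)$, which is no longer allowed, so the argument has to be rearranged to show that the entries of $\vec y$ that would have ``wanted'' the wrong sign are harmless. The cleanest way to handle this, which I would carry out in detail, is to first reduce to $\alpha_j \geq 0$, then split the sum over $j>1$ into $y_j \le 0$ and $y_j > 0$ parts, discard the latter (it only makes $\langle\vec\alpha',\vec y\rangle$ larger), reorder the surviving $|y_j|$ in decreasing order, and conclude. Everything else — that a nonnegative solution to $\widetilde T^{(k)}\vec p = \vec\alpha/\alpha_1$ gives an optimal, minimum-instantaneous-entanglement, non-echoed, minimum-average-entanglement protocol — follows immediately from Theorem~\ref{thm:entanglement_supp}, Lemma~\ref{lemma:avg_entanglement}, and the optimality mapping Eq.~(\ref{eq:mapconditions}), with no further work.
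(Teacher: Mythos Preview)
Your proposal is correct and follows essentially the same route as the paper: reduce to $\alpha_j\ge 0$, restrict $T^{(k)}$ to non-echoed columns (entries in $\{0,1\}$), apply Farkas--Minkowski, use the column $(1,0,\dots,0)^\top$ to get $y_1\ge 0$, and then produce a contradicting non-echoed row supported on index~$1$ together with indices where $y_j<0$. The only cosmetic difference is that the paper splits into two cases (at most $k-1$ negative components of $\vec y$ versus at least $k$), whereas you handle both at once by first observing $\sum_{j:y_j<0}\alpha_j'|y_j|>y_1+\sum_{j:y_j>0}\alpha_j'y_j\ge y_1$ and then applying the same $\alpha_j'\le 1$, $\sum_{j>1}\alpha_j'\le k-1$ rearrangement bound to the negative-$y_j$ indices alone; this is slightly more economical but not a different argument.
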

\begin{proof}
    We proceed with a relatively simple tweak of the proof of the main theorem. As in that theorem, we assume without loss of generality that $\alpha_{1} = \norm{\vec{\alpha}}_{\infty} = 1$. Also assume, for computational simplicity, that $\alpha_{i>1} < 1$ (i.e. there is only a single maximal-magnitude element of $\vec\alpha$) and that $\alpha_{i} > 0 \, \forall i$. These latter assumptions can easily be lifted, as we describe at the end of the proof. 
    
    We will again use the Farkas-Minkowski lemma~\cite{farkas1902, dinh2014farkas} to show that no vector $\vec{y}$ exists such that
    \begin{align}
        (T^{(k)}_{+})^{\top}\vec{y} &\geq 0, \\
        \langle \vec{\alpha}, \vec{y} \rangle &< 0,
    \end{align}
    proving the existence of a non-echoed protocol. Here, $T^{(k)}_{+}$ is $T$ restricted to non-echoed vectors (i.e., $(T^{(k)}_{+})_{ij} \in \{0,1\}$) with weight at most $k$, where $k = \ceil{\norm{\vec{\alpha}}_{1}}$. Assume a solution $\vec{y}$ exists. Noting that $(T^{(k)}_{+})^\top$ has a row given by $(1,0,\dots,0)$, it must be that $y_{1} \geq 0$. Further, for $\vec{y}$ to be a valid solution, we must have
    \begin{equation}\label{eq:non_echoed_k-1_contradiction}
        \langle \vec{\alpha}, \vec{y} \rangle = \alpha_{1}y_{1} + \sum_{i|i\neq 1, y_{i} \geq 0}\alpha_{i}y_{i} + \sum_{i|y_{i} < 0}\alpha_{i}y_{i} = y_{1} + \sum_{i|i\neq 1, y_{i} \geq 0}\alpha_{i}y_{i} + \sum_{i|y_{i} < 0}\alpha_{i}y_{i} \leq 0.
    \end{equation}
    We proceed with two cases. Suppose that at most $k-1$ elements of $\vec{y}$ are negative. Consider the row of $(T_{+}^{(k)})^{\top}$ that has a 1 in the first index and exactly on the indices where $y_{i} < 0$ (which exists because we have sufficiently restricted the number of negative elements of $\vec{y}$). Then $(T^{(k)}_{+})^{\top}\vec{y} \geq 0$ implies that
    \begin{equation}
        y_{1} + \sum_{i|y_{i}\leq 0}y_{i} \geq 0.
    \end{equation}
    But because $\alpha_{i} < 1$, this immediately implies that
    \begin{equation}
        y_{1} + \sum_{i|y_{i}\leq 0}\alpha_{i}y_{i} \geq 0,
    \end{equation}
    which means that Eq.~(\ref{eq:non_echoed_k-1_contradiction}) cannot be true, yielding a contradiction.

    Now suppose that there are at least $k$ elements of $\vec{y}$ that are negative. Let $S$ be the set of indices corresponding to the $k-1$ largest, in magnitude, $y_{i}$. Then the row of $(T_{+}^{(k)})^{\top}$ with a $1$ in the first index and precisely on the indices in $S$ leads to the condition that
    \begin{equation}
         y_{1} + \sum_{i\in S}y_{i} \geq 0.
    \end{equation}
    However, given the constraint that $\alpha_{i>1} < 1$, we find that
    \begin{equation}
       y_{1} + \sum_{i|i\neq 1, y_{i} \geq 0}\alpha_{i}y_{i} + \sum_{i|y_{i} < 0}\alpha_{i}y_{i} \geq y_{1} + \sum_{i\in S}y_{i} \geq 0,
    \end{equation}
    which is again a contradiction. 

    We briefly comment on how to lift the two assumptions we mentioned earlier. First, in the case where there exist multiple maximal elements, the same argument that generalizes the main theorem will also generalize this argument---see Appendix~\ref{app:generalization}. Second, if we allow $\alpha_{i} < 0$, it is simple to see that a protocol still exists; simply replace $(T_{+}^{(k)})_{ij} = 1$ with $\sgn(\alpha_{i})$ (and leave 0s untouched). 
\end{proof}
Thus, Lemma~\ref{lemma:avg_entanglement} and Theorem~\ref{thm:avg_entanglement} prove there exist protocols that can minimize both instantaneous entanglement (i.e., the maximum size of a cat-like state used in the protocol) and the average entanglement over the course of the entire protocol.

\section{Relaxing the Assumption on a Single Maximum Element}\label{app:generalization}
In this Appendix, we will generalize beyond the assumption in the main text that $|\alpha_1|>|\alpha_j|$ for all $j>1$. Conceptually, nothing is changed by relaxing the assumption, but the algebra becomes somewhat more tedious. In the process, we rigorously derive Eq.~(\ref{eq:bound}) and Eq.~(\ref{eq:saturabilitycond}) of the main text.

\subsection{Generalizing Eq.~(\ref{eq:saturabilitycond}) of the main text}
We start with specifically generalizing Eq.~(\ref{eq:saturabilitycond}). To begin, define
\begin{equation}
L:=\{i\,|\, |\alpha_i|=|\alpha_1|\}.
\end{equation} 
The assumption $|\alpha_1|>|\alpha_j|$ for all $j>1$, stated in the main text, is equivalent to assuming $|L|=1$. For arbitrary size $L$, we have the following set of conditions for the single-parameter bound on $q(\vec\theta)$ to be saturable (Eqs.~(\ref{eq:condonfim}) and (\ref{eq:condonfimoffdiagonal}) of the main text):
\begin{align}\label{eq:condonfim_supp}
    \mathcal{F}(\vec q)_{11}&=\frac{t^2}{\alpha_1^2},\\\label{eq:condonfim_supp_2}
    \mathcal{F}(\vec q)_{1i}&=\mathcal{F}(\vec q)_{i1}=0\quad (\forall\, i\neq 1).
\end{align}
Recall that $\mathcal{F}(\vec q)=J^\top\mathcal{F}(\vec\theta)J$, where $J$ is the Jacobian for the basis transformation from $\vec\theta$ to $\vec q$, $q_1=q$ is the linear function we wish to measure, and the other $q_j$ are some other degrees of freedom we fix. We will show that Eqs.~(\ref{eq:condonfim_supp})-(\ref{eq:condonfim_supp_2}) are satisfied if and only if 
\begin{align}\label{eq:saturabilitycond_supp}
    \sum_{i\in L} \frac{\sgn(\alpha_1)}{\sgn(\alpha_i)}\mathcal{F}(\vec\theta)_{ji}\lambda_i = \frac{\alpha_j}{\alpha_1}t^2,
\end{align}
where $\lambda_i\geq 0$ such that $\sum_i \lambda_i=1$. If $|L|=1$, this reduces to Eq.~(\ref{eq:saturabilitycond}) of the main text. 

It will be important to briefly recount how we obtain the single-parameter bound we are trying to saturate~\cite{eldredge2018optimal,qian2020optimal}. In particular, referring to Eq.~(\ref{eq:boixo}) of the main text, we seek a choice of basis that minimizes $\norm{\hat{g}_q}_s^2$, which will yield the tightest possible bound on $\mathcal{M}$, the mean-square error of $q$. Let us formally define our basis for $\mathbb{R}^{d}$ as $\{\vec\alpha^{(1)}, \vec\alpha^{(2)}, \cdots, \vec\alpha^{(d)}\}$, where $\vec\alpha^{(1)}=\vec\alpha$. We then have that $J^{-1}$ has rows given by these vectors. Let $\{\vec\beta^{(1)}, \vec\beta^{(2)}, \cdots, \vec\beta^{(d)}\}$ be the basis dual to this one. That is, these vectors form the columns of $J$ and satisfy $\vec\alpha^{(i)}\cdot\vec\beta^{(j)}=\delta_{ij}$. We can then write 
\begin{align}
    \vec\theta^\top &=(JJ^{-1}\vec\theta)^\top =(J^{-1}\vec\theta)^\top J^\top,
\end{align}
which allows us to rewrite our Hamiltonian in the convenient form 
\begin{align}
    \hat{H}=\frac{1}{2}\vec\theta^\top\hat{\vec\sigma} + \hat{H}_{c}(s) =\frac{1}{2}\sum_{i=1}^d (\vec\alpha^{(i)}\cdot\vec\theta)\vec\beta^{(i)}\cdot\hat{\vec\sigma} + \hat{H}_{c}(s) ,
\end{align}
where $\hat{\vec\sigma}=(\hat\sigma_1^{z}, \cdots, \hat\sigma_d^{z})^\top$. Then
\begin{align}\label{eq:generator}
    \hat{g}_q(0)=\frac{\partial \hat H}{\partial q}=\frac{\partial \hat H}{\partial (\vec{\alpha}^{(1)}\cdot\vec\theta)}=\frac{\vec\beta\cdot\hat{\vec\sigma}}{2},
\end{align}
where $\vec\beta=\vec\beta^{(1)}$. Because the seminorm is time-independent (see Ref.~\cite{boixo2007generalized}), we immediately have that
\begin{equation}
    \norm{ \hat{g}_q}_s=\norm{ \vec\beta}_1,
\end{equation}
and our tightest bound is given by
\begin{align} \label{eq:lpbeta1}
    &\min_{\vec\beta} \norm{\vec\beta}_1, \nonumber \\
    &\text{s.t.}\, \vec\alpha\cdot\vec\beta=1.
\end{align}
Note that 
\begin{align}\label{eq:min_beta}
    1=\sum_i \alpha_i\beta_i\leq \sum_i |\alpha_i||\beta_i| \leq |\alpha_1| \sum_i |\beta_i| =  |\alpha_1|\norm{\vec\beta}_1.
\end{align}
The first inequality is tight if either $\sgn(\beta_i)=\sgn(\alpha_i)$ or $\beta_{i} = 0$ for all $i$. The second is slightly more complicated to saturate. Recall $L=\{i\,|\, |\alpha_i|=|\alpha_1|\}$. Then the second inequality is tight if and only if 
\begin{align}
&\beta_i=0 \text{ for } i\notin L, \label{eq:0cond}\\
&\sum_{i\in L} |\beta_i|=\frac{1}{|\alpha_1|} \label{eq:n0cond}. 
\end{align}
Any solution $\vec\beta$ specifies the first column of the Jacobian $J$ and allows us to rewrite the conditions in Eq.~(\ref{eq:condonfim_supp})-(\ref{eq:condonfim_supp_2}) as
\begin{align}
     \mathcal{F}(\vec q)_{11}&=\vec\beta^\top \mathcal{F}(\vec\theta)\vec\beta=\frac{t^2}{\alpha_1^2}, \label{eq:cond1new} \\
    \mathcal{F}(\vec q)_{1i}&=\mathcal{F}(\vec q)_{i1}=(\vec\beta^{(i)})^\top\mathcal{F}(\vec\theta)\vec\beta=0 \quad (\forall\, i\neq 1) \label{eq:cond2new}.
\end{align}
As $\vec{\alpha}^{(i)}\cdot\vec{\beta}^{(j)}=\delta_{ij}$, Eq.~(\ref{eq:cond2new}) immediately implies that the vector $\mathcal{F}(\vec\theta)\vec\beta$ must be proportional to $\vec\alpha$ and Eq.~(\ref{eq:cond1new}) specifies the constant of proportionality. In particular, we require
\begin{equation}\label{eq:prop}
    \mathcal{F}(\vec\theta)\vec\beta=\frac{t^2}{\alpha_1^2}\vec\alpha.
\end{equation}
Invoking Eqs.~(\ref{eq:0cond})-(\ref{eq:n0cond}) and the condition that $\sgn(\beta_i)=\sgn(\alpha_i)$ for $\beta_{i} \neq 0$, we write $\beta_i = \lambda_i\sgn(\alpha_i)/|\alpha_1| $, where $\lambda_i \geq 0$ for $i \in L$ and $\lambda_i = 0$ for $i \notin L$ such that $\sum_i\lambda_i = 1$.
The individual components of Eq.~(\ref{eq:prop}) imply
\begin{equation}
    \sum_{i\in L} \mathcal{F}(\vec\theta)_{ij}\sgn(\alpha_i)\lambda_i =   \sum_{i\in L} \mathcal{F}(\vec\theta)_{ji}\sgn(\alpha_i)\lambda_i = \frac{t^2}{|\alpha_1|}\alpha_j,\quad \sum_i \lambda_i=1, \quad \lambda_i\geq 0,
\end{equation}
which, using $|\alpha_1|=\mathrm{sgn}(\alpha_1)\alpha_1$ and that $\mathrm{sgn}(\alpha_1)\mathrm{sgn}(\alpha_i)=\mathrm{sgn}(\alpha_1)/\mathrm{sgn}(\alpha_i)$ for $i \in L$, yields
\begin{equation}\label{eq:7gen}
    \sum_{i\in L} \frac{\sgn(\alpha_1)}{\sgn(\alpha_i)}\mathcal{F}(\vec\theta)_{ij}\lambda_i =   \sum_{i\in L} \frac{\sgn(\alpha_1)}{\sgn(\alpha_i)}\mathcal{F}(\vec\theta)_{ji}\lambda_i = \frac{\alpha_j}{\alpha_1}t^2,\quad \sum_i \lambda_i=1, \quad \lambda_i\geq 0,
\end{equation}
which reduces to Eq.~(\ref{eq:saturabilitycond}) of the main text, when $|L|=1$, as desired.

\subsection{Generalizing the derivation of Eq.~(\ref{eq:prob}) of the main text}
At this point, we can generalize the derivation of Eq.~(\ref{eq:prob}) of the main text to this setting of more than one maximum element of $\vec\alpha$. In particular, Lemma~\ref{lem:f11} can be immediately extended to the following:

\begin{lemma}\label{lem:f11_new}
Any optimal protocol, independent of the choice of control, requires that $\langle \hat{\mathcal{H}}_{j}(t)\rangle = 0$ for all $j\in L$ and that the probe state be of the form
\begin{equation}\label{eq:state_gen}
    \ket{\psi} = \frac{\left(\bigotimes_{j\in L}\ket{b_j}\right)\ket{\varphi_{0}} + e^{i\phi}\left(\bigotimes_{j\in L}\ket{b_j+1}\right)\ket{\varphi_{1}}}{\sqrt{2}},
\end{equation}
for all times $s \in [0, t]$, where
\begin{equation}
    b_j = \begin{cases} 0, \quad \mathrm{if}\,\mathrm{sgn}(\alpha_j)=1,\\
    1, \quad \mathrm{if}\,\mathrm{sgn}(\alpha_j)=-1,
    \end{cases}
\end{equation}
and $\phi, \ket{\varphi_{0}}, \ket{\varphi_{1}}$ can be arbitrary and $s$-dependent. The addition inside the second ket of Eq.~(\ref{eq:state_gen}) is mod 2. 
\end{lemma}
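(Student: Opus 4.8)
The plan is to transcribe the proof of Lemma~\ref{lem:f11}, replacing the single generator $\hat g_1$ by the $q$-generator $\hat g_q = \vec\beta\cdot\hat{\vec\sigma}/2$ and feeding in the multi-maximal analysis of Eqs.~(\ref{eq:lpbeta1})--(\ref{eq:7gen}) established just above. From that analysis, optimality forces some valid $\vec\beta$ --- supported on $L$, with $\beta_i=\lambda_i\sgn(\alpha_i)/|\alpha_1|$, $\lambda_i\ge 0$, $\sum_i\lambda_i=1$ --- to satisfy Eq.~(\ref{eq:7gen}). Using $\alpha_j/\alpha_1=\sgn(\alpha_j)\sgn(\alpha_1)$ for $j\in L$, that condition can be rewritten, for each $j\in L$, as $\sum_{i\in L}\sigma_{ij}\lambda_i = t^2$, where I set $\sigma_{ij}:=\sgn(\alpha_i)\sgn(\alpha_j)\,\mathcal F(\vec\theta)_{ij}$.

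First I would record two universal inequalities. Since each $\hat g_j=\hat\sigma^z_j/2$ ($j\in L$) has unit seminorm, the inequality chain of Lemma~\ref{lem:f11} gives $\mathcal F(\vec\theta)_{jj}\le t^2$; and positive semidefiniteness of the quantum Fisher information matrix gives $\mathcal F(\vec\theta)_{ij}^2\le\mathcal F(\vec\theta)_{ii}\mathcal F(\vec\theta)_{jj}\le t^4$, so each $\sigma_{ij}\le t^2$. Hence a convex combination $\sum_{i\in L}\sigma_{ij}\lambda_i$ can equal $t^2$ only if $\sigma_{ij}=t^2$ for every $i$ with $\lambda_i>0$. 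Fixing any $i_0$ in the support of $\vec\beta$ (nonempty since $\vec\alpha\cdot\vec\beta=1$), this yields $\sigma_{j i_0}=t^2$ for \emph{every} $j\in L$, which in turn forces $\mathcal F(\vec\theta)_{jj}=t^2$ for all $j\in L$ together with perfect sign-corrected correlation between qubit $j$ and the reference qubit $i_0$ at all times.

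The rest is the single-qubit content of Lemma~\ref{lem:f11} applied qubit by qubit. For each $j\in L$, $\mathcal F(\vec\theta)_{jj}=t^2$ forces --- via the same saturation of the seminorm bound, and the observation that $\hat g_j(s)$ may be replaced by $\hat g_j$ when acting on an optimal probe state --- that $\ket{\psi(s)}$ is, for all $s$, an equal-weight superposition of the two eigenstates of $\sgn(\alpha_j)\hat\sigma^z_j$, namely $\ket{b_j}$ and $\ket{b_j+1}$ with $b_j$ as in the statement, and that $\langle\hat{\mathcal H}_j(t)\rangle=t\langle\hat g_j\rangle=0$. The perfect sign-corrected correlations of every $j\in L$ with the common reference qubit $i_0$ then lock these single-qubit cat structures together: one global branch must carry $\bigotimes_{j\in L}\ket{b_j}$ and the other $\bigotimes_{j\in L}\ket{b_j+1}$, with the remaining sensor and ancilla qubits free, which is exactly Eq.~(\ref{eq:state_gen}).

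I expect the main obstacle to be the middle step: keeping the $\sgn(\alpha_i)$ bookkeeping straight when reading off the per-component conditions of Eq.~(\ref{eq:7gen}), and then arguing cleanly that ``saturating a convex combination of terms each at most $t^2$'' combined with positive semidefiniteness upgrades a single perfect correlation with $i_0$ into the fully locked branch structure across all of $L$, uniformly in $s$. Note that the familiar worry --- that some $\lambda_i$ might vanish, so that $\hat g_q$ on its own does not constrain those qubits --- is precisely what this argument sidesteps: it is the per-component conditions of Eq.~(\ref{eq:7gen}), one for each $j\in L$, rather than $\hat g_q$ alone, that pin down every qubit in $L$.
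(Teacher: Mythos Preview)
Your proposal is correct and follows essentially the same route as the paper: both arguments combine the per-$j$ condition of Eq.~(\ref{eq:7gen}) with positive semidefiniteness and the seminorm bound to force $\mathcal F(\vec\theta)_{jj}=t^2$ for every $j\in L$, then invoke the Lemma~\ref{lem:f11} saturation argument qubit-by-qubit and use the resulting perfect $\hat\sigma^z_{i}\hat\sigma^z_{j}$ correlations (via Eq.~(\ref{eq:f1j_final})) to lock the branches into the form of Eq.~(\ref{eq:state_gen}). Your convex-combination-plus-reference-qubit presentation is a touch more explicit than the paper's, and your use of $|\mathcal F_{ij}|\le\sqrt{\mathcal F_{ii}\mathcal F_{jj}}$ is the cleaner way to invoke positive semidefiniteness, but the substance is the same.
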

\begin{proof}
We have the following two facts: (1) $\sum_{i\in L} \lambda_i (\mathrm{sgn}(\alpha_j)/\mathrm{sgn}(\alpha_i)) \mathcal{F}(\vec\theta)_{ij}= t^2$ for all $j\in L$ (by Eq.~(\ref{eq:7gen})); (2) $|\mathcal{F}(\vec\theta)_{ij}|\leq \mathcal{F}(\vec\theta)_{jj}$ for all $i$ (by the fact that the Fisher information matrix is positive semidefinite). These facts imply that an optimal protocol must have $\mathcal{F}(\vec\theta)_{jj}=t^2$ for all $j\in L$. The fact that $\langle \hat{\mathcal{H}}_{j}(t)\rangle = 0$ for all $j\in L$ and the fact that all sensors in $L$ must be in a cat-like state over computational basis states follows immediately via an identical calculation to the proof of  Lemma~\ref{lem:f11} for each $j\in L$. From Eq.~(\ref{eq:f1j_final}) it follows directly that these cat-like states over the qubit sensors in $L$ must take the form in the theorem statement in order to achieve the correct sign on the components of $\mathcal{F}(\vec{\theta})$.
\end{proof}

Using Lemma~\ref{lem:f11_new}, it is clear that we should restrict the set $\mathcal{T}$ of states such that $\tau_j^{(n)}=\sgn(\alpha_j)/\sgn(\alpha_1)$ for all $j\in L$ and all $\vec\tau^{(n)}$. This is the generalization of the fact that that, when $|L|=1$, we require $\tau_1^{(n)}=1$ for all $\vec\tau^{(n)}$.

In addition, given the required form of the optimal states, it is easy to generalize Eq.~(\ref{eq:s22}) to the condition that
\begin{equation}\label{eq:s22_new}
    \sum_{i\in L} \left[\lambda_i \int_0^t ds' \bra{\psi(s')}\hat \sigma^z_i\hat \sigma^z_j\ket{\psi(s')}\right] = \frac{\alpha_{j}}{\alpha_{1}}t,
\end{equation}
which implies that, for protocols switching between states in the modified $\mathcal{T}$,
\begin{equation}
    \sum_{i\in L} \left[\lambda_i\sum_{l=0}^{n}(t_{l+1}^{*}-t_{l}^{*})\tau_{j}^{(l)}\right] = \frac{\alpha_{j}}{\alpha_{1}}t,
\end{equation}
where we assume that we switch to the state labeled by $\vec\tau^{(l)}$ at time $t^*_l$. As before, in our protocols $t^*_{l+1}-t^*_l=p_lt$. In addition, $\sum_i \lambda_i=1$. So an optimal protocol requires
\begin{equation}
    t \sum_{l=0}^{n}p_l\tau_{j}^{(l)} = \frac{\alpha_{j}}{\alpha_{1}}t 
    \qquad
    \implies
    \qquad
    T\vec p = \vec\alpha,
\end{equation}
recovering Eq.~(\ref{eq:prob}) of the main text for general $L$, with the addition that we fix $T_{jn}=\tau_j^{(n)}=\sgn(\alpha_j)/\sgn(\alpha_1)$ for all $j\in L$ and all $n$.

\subsection{Generalizing the proof of Theorem 1 of the main text}
Recall, we divided the proof into two parts. First, we showed the existence of an optimal protocol using $k$-partite entangled cat-like states, subject to the upper bound of the theorem statement. Second, we showed that, subject to the lower bound of the theorem statement, there exists no optimal protocol using only $(k-1)$-partite entanglement. 

Let's begin by addressing how the first part changes upon relaxing the assumption that $|\alpha_1|>|\alpha_j|$ for all $j>1$. Note that, given our choice that $\tau_j^{(n)}=\sgn(\alpha_j)/\sgn(\alpha_1)$ for all $j\in L$ and all $\vec\tau^{(n)}$, the first $|L|$ rows of $T^{(k)}$ yield redundant equations in Eq.~(19) of the main text. Therefore, we can define $\tilde T^{(k)}$ as $T^{(k)}$ with all rows $j\in L\setminus \{1\}$ eliminated. Similarly, $\tilde{\vec\alpha}$ is $\vec\alpha$ with elements $j\in L\setminus \{1\}$ eliminated. Further, define the new system of equations, which we call System $\tilde{A}$:
 \begin{align}
     \tilde T^{(k)}\vec \tilde p^{(k)} &=\tilde{\vec\alpha}/\alpha_1, \label{eq:thm1aprime}\\
     \tilde{\vec{p}}^{(k)}&\geq0 \label{eq:thm1bprime}.
 \end{align}
System $A$ has a solution if and only if System $\tilde A$ does.
We can proceed as in the proof in Appendix~\ref{sec:part2full} to show via the Farkas-Minkowski lemma that System $\tilde A$ has a solution if $\norm{\vec\alpha}_1/\norm{\vec\alpha}_\infty\leq k\implies \norm{\tilde{\vec\alpha}}_1/\norm{\tilde{\vec\alpha}}_\infty\leq k-|L|+1$. The details of the proof of this part are completely identical with this substitution. 

The second part of the proof can similarly be adjusted straightforwardly. In particular, to satisfy the condition of Eq.~(\ref{eq:7gen}), which is the generalization of Eq.~(\ref{eq:saturabilitycond}) in the main text, for $j\in L$ we require
\begin{equation}\label{eqn:Fijproofgeneralization}
    \frac{\alpha_j}{\alpha_1}t^2=\frac{\sgn(\alpha_j)}{\sgn(\alpha_1)}t^2=
    \sum_{i\in L} \frac{\sgn(\alpha_1)}{\sgn(\alpha_i)}\mathcal{F}(\vec\theta)_{ij}\lambda_i,
\end{equation}
which implies
\begin{equation}
    t^2=\sum_{i\in L} \frac{\sgn(\alpha_i)}{\sgn(\alpha_j)} \mathcal{F}(\vec\theta)_{ij}\lambda_i.
\end{equation}
This in turn implies that for $i,j\in L$
\begin{equation}
    \mathcal{F}(\vec\theta)_{ij}=\frac{\sgn(\alpha_i)}{\sgn(\alpha_j)}t^2.
\end{equation}
Therefore, for all $i\in L$ we require $\mathcal{F}(\vec\theta)_{ii}=t^2$. From here, arguments identical to those in Appendix~\ref{sec:part2full} apply to all $i\in L$, not just $i=1$. That is, all the probe states must always be fully entangled on the qubits in $L$ and matrix elements $\mathcal{F}(\vec\theta)_{ij}$ for $i\in L$, $j\notin L$ can only accumulate magnitude if sensor $j$ is also entangled with the qubits in $L$. Assuming the existence of an optimal protocol using $(k-1)$-partite entanglement, a contradiction arises in an identical way.

\end{appendix}

\end{document}